\newtheorem{thm}{Theorem}
\newtheorem{prop}[thm]{Proposition}
\DeclareMathOperator{\Img}{Im}
\DeclareMathOperator{\Rea}{Re}
\DeclareMathOperator{\Deg}{deg}
\DeclareMathOperator{\Ind}{Index}
\begin{document}



\title{Artificial Gauge Fields and Spin-Orbit Couplings in Cold Atom Systems}
\author{Zhang, Junyi} \affiliation{D\'epartement de Physique, l'\'Ecole Normale sup\'erieure, 24 rue d'Ulm, 75005, Paris, France}
\date{\today}

\begin{abstract}
  This article is a report of  Projet bibliographique  of M1 at \'Ecole Normale Sup\'erieure.  In this article we reviewed the historical developments in artificial gauge fields and spin-orbit couplings in cold atom systems.  We resorted to origins of literatures to trace the ideas of the developments. For pedagogical purposes, we tried to work out examples carefully and clearly, to verified the validity of various approximations and arguments in detail, and to give clear physical and mathematical pictures of the problems that we discussed.  The first part of this article introduced the fundamental concepts of Berry phase and Jaynes-Cummings model.  The second part reviewed two schemes to generate artificial gauge fields with N-pod scheme in cold atom systems.  The first one is based on dressed-atom picture which provide a method to generate non-Abelian gauge fields with dark states.  The second one is about rotating scheme which is achieved earlier historically. Non-Abelian gauge field inevitably leads to spin-orbit coupling.  We reviewed some developments in achieve spin-orbital coupling theoretically and experimentally.  The fourth part was devoted to recently developed idea of optical flux lattice that provides a possibility to reach the strongly correlated regime in cold atom systems.  We developed a geometrical interpretation based on Cooper's theory. Some useful formulae and their proofs were listed in the Appendix.
\end{abstract}
\maketitle

\tableofcontents

\section{Introduction}
\subsection{Berry Phase}
The dynamics of quantum physics was represented by unitary operators.  All the observables are connected to their expectation values of the corresponding operator sandwiched by state vectors and their dual.  In most cases, physical realities depend only on the modular square of the wave function (by Born's probabilistic interpretation~\cite{Born1954}) rather than the wave function itself.  Therefore it is not important if the wave function is multiplied by some phase factor.  Nonetheless, sometimes the results do depend on the phase factors.  One used to argue that those results depending on the phase factor are not gauge invariant, thus they are not observable directly, and these factors can be gauged out by a proper gauge transform.  Therefore, the effects of the phase factor has been long neglected.  While historically, many effects, as A-B effect~\cite{ABeffectES}~\cite{ABeffect}~\cite{ABExpTonomura1}~\cite{ABExpTonomura2}, and nuclear motion in molecules~\cite{Mead1979}~\cite{MeadRevModPhys} were predicted and observed which are clear clues that quantum phase does play some ``observable'' roles in physical reality.  It was Berry~\cite{Berry1984} who pointed out the importance of phase factor in a adiabatic circular evolution~\cite{BornFock}~\cite{Messiah}~\cite{BOApprox} of a quantum state, which is now often called Berry phase. (It is also called Mead-Berry Phase. One also uses geometric phase as synonym.)

Simon~\cite{Simon1983} pointed out the mathematical structure of the Berry phase. He attributed Berry's idea to the underlying holonomical structure of vector bundle.  This idea stimulated understanding the topological characters of quantum Hall effect~\cite{TKNN}, which developed to another exciting field in condensed matter physics.  Wilczek and Zee~\cite{WilczekZee} generalized the ideas of Berry and Simon to non-Abelian gauge fields and proposed possible methods to observe these effects~\cite{ZeeQuadrupole}.

In cold atom systems, as the motion of the atom can be considered adiabatic, then a nontrivial phase factor is induced by light-atom couplings which were well known in quantum optics.  Here we shall follow ideas of Berry, Simon, and Wilczek and Zee to develop the formulations of gauge structure induced by adiabatic evolution.  The coupling between light and beam will be analyzed in the following section by introducing Jaynes-Cumming's model.  The internal degrees of freedom play a role of pseudo-spin.  The  induced  artificial non-Abelian gauge fields will naturally induce the spin-orbital coupling in a system, which has important and interesting features.

In Ref.~\citenum{Berry1984}, Berry first considered a non-degenerate state evolving adiabatically under a time dependent Hamiltonian that is parameterized by a circuit.  In addition to the dynamical phase factor $exp(-\frac{i}{/\hbar}\int_0^t E(\mathbf{R}(t'))dt' )$, the state acquires another phase factor $exp(i\gamma(t))$, where $\dot{\gamma}(t)=i\langle \psi(\mathbf{R}(t))| \nabla_\mathbf{R}|\psi(\mathbf{R}(t))\rangle\cdot \dot{\mathbf{R}}(t)$, $\mathbf{R}$ is the parameter.   If the evolution path is not closed, this phase factor can be gauged away.  While if the Hamiltonian gets back to its initial value, then this additional phase accumulated along the circuit $C$ is
\begin{equation}\label{eq:BerryPhase}
\begin{split}
  \gamma(C)=&i\oint_C \langle\psi(\mathbf{R}(t))| \nabla_\mathbf{R}|\psi(\mathbf{R}(t))\rangle\cdot \dot{\mathbf{R}}(t) dt\\
  =&i\oint_C\langle\psi(\mathbf{R})| \nabla_\mathbf{R}|\psi(\mathbf{R})\rangle\cdot d{\mathbf{R}},
\end{split}
\end{equation}
which is time independent under adiabatic approximation. Since $\langle\psi| \big(\nabla|\psi \rangle \big)=- \big(\nabla\langle\psi| \big) |\psi \rangle $, $\gamma$ is real.  By Stokes theorem, we have
 \begin{equation}\label{eq:BerryPhase2}
\begin{split}
  \gamma(C)
  =&-\Img\oint_C\langle\psi(\mathbf{R})| \nabla_\mathbf{R}|\psi(\mathbf{R})\rangle\cdot d{\mathbf{R}}\\
  =&-\Img\iint_{S(\partial S=C)}d\mathbf{S}\cdot\nabla \times \langle\psi(\mathbf{R})| \nabla_\mathbf{R}|\psi(\mathbf{R})\rangle\\
  =&-\Img\iint_{S(\partial S=C)}d\mathbf{S}\cdot\nabla \times \mathbf{A},
\end{split}
\end{equation}
which is gauge invariant.  Simon~\cite{Simon1983} point out that the integrand is in fact a two form relating to Chern class, which characterize the topology of space of $\mathbf{R}$.   It also indicates the differential structure of the bundle~\cite{Simon1983}~\cite{WuYang}.

In analog to the gauge structure of electromagnetic field,  Yang and Mills generalized $U(1)$ gauge to non-Abelian case~\cite{YangMills}, which is fundamental to the standard model of particle physics.  Wu and Yang also studied nonintegrable phase factor in both Abelian and non-Abelian cases~\cite{WuYang}.  Similarly, Berry and Simon's idea was also generalized to non-Abelian case~\cite{WilczekZee}.  Ref.~\citenum{WilczekZee} considered the ``phase factor'' of a degenerate sub-manifold with dimension larger than one, where gauge potential $\mathbf{A}$ becomes a matrix.  In Ref.~\citenum{WilczekZee} they resort to some symmetry of the system to guarantee the degeneracy that is key to the appearance of non-Abelian gauge field.  However, in cold atom system, we can achieve the degenerate submanifold by dark states of atoms interacting with the lights and therefore the non-Abelian gauge fields emerges naturally.

\subsection{Jaynes-Cummings's Model}
In 1960s, Jaynes and Cummings introduced a model to describe the interactions of a two-level atom with cavity modes of electromagnetic field, which is now named after them~\cite{JCModel1963}~\cite{Cummins1965}.  This model proves to be simple but precise enough to describe the actual experiments of Cavity Quantum ElectroDynamics (Cavity QED or CQED).  It also provides a toy model for studying the artificial gauge fields in cold atom systems~\cite{DalibardArtificialGauge}.

According to Janes and Cummings (Ref.~\citenum{JCModel1963}), we may consider a system of a two-level atom and singled mode of a cavity.  The free electromagnetic field subjected to the boundary condition of the cavity can be quantized as
\begin{equation}\label{eq:fieldQuant}
  \mathbf{A}=A\left(\hat{\epsilon} a(t)e^{i\mathbf{p}\cdot\mathbf{x}}+ \hat{\epsilon}^* a^\dagger(t)e^{-i\mathbf{p}\cdot\mathbf{x}}\right),
\end{equation}
where $\hat{\epsilon}$ and $\hat{\epsilon}^*$ are complex polarization of the field, $a$ and $a^\dagger$ are creation and annihilation operators of the photon, and $A$ is the normalization factor depends on the geometry and boundary conditions of the cavity. We denote $|g\rangle$ and $|e\rangle$ for the internal degree of freedom for the two-level atom.  Thus the noninteracting bases are $ |\alpha\rangle \otimes |n\rangle,\ \alpha=e,g,\ n=0,1,2,...$.

The atom coupled to the electromagnetic field through the interaction Hamiltonian of the form
\begin{equation}\label{eq:HamAtomFieldCoupling}
  H_{int}=-\hat{\mathbf{D}}\cdot\mathbf{E},
\end{equation}
where $\hat{\mathbf{D}}$ is the dipole operator acting on the states of the atom
\footnote{Firstly, in Ref.~\citenum{JCModel1963}, they calculated results for both quantized fields and semiclassical approaches. Here we may well start from the most general quantized forms, but most consequences also work in semiclassical scheme, so we omit the $\hat\ $ over the fields, which will not cause any ambiguity from the context. Secondly, this form of coupling is connected to the form of $-\mathbf{j}\cdot\mathbf{A}$ by Fierz-Pauli transformation~\cite{FierzPauli}~\cite{CCTQEDV_2}.}.
For brevity, we may assume that the entries of the dipole operator is of the form $\langle m|\hat{\mathbf{D}}|m'\rangle = \mathbf{d}_0 (1-\delta_{mm'})$ or $\hat{\mathbf{D}}=\mathbf{d}_0(\hat{\sigma}_+ + \hat{\sigma}_-) $. (In general the dipole matrix is of the form
 $\mathbf{D}_{mm'}=\begin{pmatrix} d_{ee} &  d_{eg} \\  d_{ge} &  d_{gg} \end{pmatrix}, d_{eg}=d_{ge}^*$.
)

There are two kinds of couplings: one couples $|g,n+1\rangle$ and $|e,n\rangle$; the other couples $|g,n\rangle$ and $|e,n+1\rangle$.  The former is related to the process of de-excitation of the atom by emitting a photon or excitation by absorbing one photon.  The later is in the contrast.  It seems to be a violation of the energy conservation for the second term but it is not the case as we do not take the motions of the atom in real space in to account~\footnote{Comments in the lectures by Prof. ZHANG Li of IASTU, China and Prof. Jean-Michel Raimond of UPMC, France}.  According to Ref.~\citenum{JCModel1963}, the second kind of coupling is negligible (as the detuning is not far away from the resonance it is equivalent to the seminal Rotating Wave Approximation (RWA)~\cite{HarocheRaimond}). With RWA, the coupling Hamiltonian decoupled into blocks, with the ground state unshifted, $|e,0\rangle$ coupling to $|g,1\rangle$, $|e,1\rangle$ to $|g,2\rangle$ ... Then the Hamiltonian of the $n^{th}$ coupled block is of the form
\begin{equation}\label{eq:HamBlock}
  \begin{pmatrix}
    E_e + (n-1)\hbar\omega  &  \sqrt{n}\hbar \alpha \\
    \sqrt{n}\hbar \alpha^*   &  E_g + n\hbar\omega
  \end{pmatrix}
  =
  \begin{pmatrix}
    E_1    &  V_{12} \\
    V_{21} &  E_2
  \end{pmatrix},
\end{equation}
where $V_{12}=V=V^*_{21}$ is the coulpling of the two quasi-degenerate levels in the $n^{th}$ submanifold (when detuning is not large).  Therefore the system can be solved exactly by diagonalizing the block Hamiltonian.  The new eigenenergies are
\begin{equation}\label{eq:eigEnergyTLS}
  \begin{split}
    E_\pm =& \frac{1}{2}\left((E_1+E_2)\pm \sqrt{{(E_1+E_2)}^2-4(E_1E_2-{|V|}^2)}\right) \\
          =& \frac{(E_1+E_2)}{2}\pm \frac{1}{2}\sqrt{{(E_1-E_2)}^2+4{|V|}^2}\\
          =& \frac{(E_e+E_g)}{2}+(n-\frac{1}{2})\hbar\omega \\
           & \pm \frac{1}{2}\sqrt{{(E_e-E_g-\hbar\omega)}^2+4{n\hbar^2|\alpha|}^2}.
  \end{split}
\end{equation}

We can also obtain the time evolution of the states with this Hamiltonian by solving time-dependent Schr\"odinger equation
\begin{equation}\label{eq:tdschroedinger}
  i\hbar \begin{pmatrix} \dot{a}\\ \dot{b} \end{pmatrix}
  =
  \begin{pmatrix}
    E_1    &  V_{12} \\
    V_{21} &  E_2
  \end{pmatrix}
  \begin{pmatrix} a\\ b \end{pmatrix},
\end{equation}
where $a$ is the amplitude of state $|e,n\rangle$ and $b$ is that of $|g,n+1\rangle$. In Ref.~\citenum{JCModel1963}, they used this model for beam maser, so they considered an atom originally in excited state $|e\rangle$ decaying to the ground state $|g\rangle$, where, specially, n=0 corresponds to the amplitude of spontaneous emission. Eliminate $b$ and its derivative, we obtain
\footnote{Take the derivative of the first line of Eq.~\ref{eq:tdschroedinger} with respect to time; substitute second line for $\dot{b}$, and first line for $b$.}
\begin{equation}\label{eq:eqA}
  -\hbar^2 \ddot{a}=i \hbar(E_1+E_2) \dot{a} -(E_1E_2-{|V|}^2)a.
\end{equation}
Let $a=e^{-i\omega t}$ (test solution), and substitute for Eq.~\ref{eq:eqA}, we have
\begin{equation}\label{eq:eqomega}
  {(\hbar\omega)}^2 - \hbar\omega (E_1+E_2) + (E_1E_2-{|V|}^2) =0.
\end{equation}
It is easy to observe that this is exactly the secular equation we have solved for the eigen-energies, therefore $\hbar\omega_\pm=E_\pm$, and the general solution is $a=A_+ e^{-i\omega_+ t} + A_- e^{-i\omega_- t}$ where $A_+$ and $A_-$ are to be determined by initial conditions.

Since at $t=0$, the atom is in excited state, i.e. $a(t=0)=1,b(t=0)=0$.  Substitute back to Eq.~\ref{eq:tdschroedinger}, we obtain the initial conditions for their first derivatives $i\hbar\dot{a}(t=0)=E_1,i\hbar\dot{b}(t=0)=V^*$. We achieve the solutions
\begin{equation}\label{eq:abAmpl}
\begin{split}
  a=&e^{-i\frac{(E_1+E_2)t}{2\hbar}}\left(\cos\Omega't/2 +i\frac{\Delta}{\hbar\Omega'}\sin\Omega't/2 \right),\\
  b=&e^{-i\frac{(E_1+E_2)t}{2\hbar}}\left(-2i\frac{V^*}{\hbar\Omega'}\sin\Omega't/2 \right),\\
  {|a|}^2=&\cos^2\Omega't/2 +\frac{{\Delta'}^2}{\hbar^2\Omega'^2}\sin^2\Omega't/2 ,\\
  {|b|}^2=&\frac{4{|V|}^2}{\hbar^2\Omega'^2}\sin^2\Omega't/2,\\
\end{split}
\end{equation}
where $\Delta'=E_1-E_2$ is called detuning and $\hbar\Omega'=\sqrt{\Delta'^2+4{|V|}^2}$ is the generalized Rabi frequency.

When $\Delta'\ll\hbar\Omega'$ (about resonance),
\begin{equation}\label{eq:abOscil}
\begin{split}
  {|a|}^2 \approx &\cos^2\Omega't/2,\\
  {|b|}^2 \approx &\sin^2\Omega't/2,\\
\end{split}
\end{equation}
i.e. the atom oscillates between $|e\rangle$ and $|g\rangle$ of frequency $\Omega'$.
When $V \rightarrow 0$, take the limit of $t\rightarrow +\infty$, by Theorem~\ref{thm:sindelta} (see Appendix)
\begin{equation}\label{eq:abOscil2}
\begin{split}
  {|b|}^2=&\frac{4{|V|}^2}{\hbar^2\Omega'^2}\sin^2\Omega't/2\\
         =&\frac{2{|V|}^2T}{\hbar^2}\pi\delta(\omega_{eg}-\omega)\\
         =&\frac{2\pi}{\hbar}{|V|}^2T\delta(E_e-E_g-\hbar\omega),
\end{split}
\end{equation}
which is exactly Fermi's golden rule.  In fact, in Ref.~\citenum{JCModel1963}, Jaynes and Cummings proposed their model to calculate the noise figures of maser beyond Fermi's golden rule.

Now we introduce the semiclassical version of the Jaynes-Cummings model that inherits most features of the full quantum version and is easier to apply to real systems of CQED and cold atoms.  The only difference is that now electromagnetic field is classical, i.e. $\mathbf{E}=\mathbf{E}_0 \cos (\omega t+\phi)$.  Then the interaction Hamiltonian in Eq.~\ref{eq:HamAtomFieldCoupling} is
\begin{equation}\label{eq:HamInt}
\begin{split}
  H_{int}=&-\hat{\mathbf{D}}\cdot\mathbf{E}=-\mathbf{d}_0 \cdot \mathbf{E}_0 \cos (\omega t+\phi)(\hat{\sigma}_+ + \hat{\sigma}_-) \\
  =&\hbar\Omega \cos (\omega t+\phi)(\hat{\sigma}_+ + \hat{\sigma}_-)\\
  =&\hbar\Omega \cos (\omega t+\phi)\hat{\sigma}_x,
\end{split}
\end{equation}
where $\Omega=-\mathbf{d}_0 \cdot \mathbf{E}_0 /\hbar$ is called Rabi frequency. As for the Hamiltonian of the two-level atom, by choosing a proper point of $E=0$, we may write it as
\begin{equation}\label{eq:HamAtom}
  H_{atom}=\frac{1}{2}\hbar\omega_{eg}\hat\sigma_z,
\end{equation}
where $\omega_{eg}$ is the energy difference of two internal levels.  The total Hamiltonian $H=H_{atom}+H_{int}$ is time dependent.  Since the system is not far from resonance, we can change to a ``rotating'' frame to eliminate the time dependence by choosing an ``interaction'' picture properly.
\begin{equation}\label{eq:HamSplit}
  \begin{split}
    H=& \frac{1}{2}\hbar\omega_{eg}\hat\sigma_z +\hbar\Omega \cos (\omega t+\phi)\hat{\sigma}_x \\
     =& \frac{1}{2}\hbar\omega \hat\sigma_z +\frac{1}{2}\hbar(\omega_{eg}-\omega)\hat\sigma_z + \hbar\Omega \cos (\omega t+\phi)\hat{\sigma}_x \\
     =&H_0+H_1,
  \end{split}
\end{equation}
where $H_0=\frac{1}{2}\hbar\omega \hat\sigma_z$ and $H_1=\frac{1}{2}\hbar(\omega_{eg}-\omega)\hat\sigma_z + \hbar\Omega \cos (\omega t+\phi)\hat{\sigma}_x$.  Then
\begin{equation}\label{eq:HamI}
\begin{split}
  H_I=&e^{iH_0t/\hbar}H_1e^{-iH_0t/\hbar}\\
     =&\frac{1}{2}\hbar\Delta \hat\sigma_z + \hbar\Omega \cos (\omega t+\phi)\\
      & \times e^{i\frac{1}{2}\omega t \sigma_z}(\hat{\sigma}_+ + \hat{\sigma}_-)e^{-i\frac{1}{2}\omega t \sigma_z}\\
     =&\frac{1}{2}\hbar\Delta \hat\sigma_z + \hbar\Omega \cos (\omega t+\phi)(e^{i\omega t}\hat{\sigma}_+ + e^{-i\omega t}\hat{\sigma}_-)\\
     =&\frac{1}{2}\hbar\Delta \hat\sigma_z + \frac{1}{2}\hbar\Omega \bigg[e^{-i\phi}\hat{\sigma}_+ + e^{i\phi}\hat{\sigma}_-\\
      &+ e^{i(2\omega t+\phi)}\hat{\sigma}_+ + e^{-i(2\omega t+\phi)}\hat{\sigma}_-\bigg]\\
     \approx&\frac{1}{2}\hbar\Delta \hat\sigma_z + \frac{1}{2}\hbar\Omega \bigg[e^{-i\phi}\hat{\sigma}_+ + e^{i\phi}\hat{\sigma}_-\bigg]\\
     =& \frac{\hbar}{2}
        \begin{pmatrix}
          \Delta                  &    e^{-i\phi} \Omega \\
          e^{i\phi} \Omega        &    -\Delta
        \end{pmatrix}\\
     =& \frac{\hbar \Omega'}{2}
        \begin{pmatrix}
          \cos \theta             &    e^{-i\phi} \sin \theta \\
          e^{i\phi} \sin \theta   &    -\cos \theta
        \end{pmatrix}
\end{split}
\end{equation}
where $\Delta=\omega_{eg}-\omega$, $\Omega'=\sqrt{\Delta^2+\Omega^2}$, $\theta=\arctan (\Omega/\Delta)$, and $\hbar\Omega e^{-i\phi}/2=V$.
\begin{figure}
  \centering
  \includegraphics[scale=0.45]{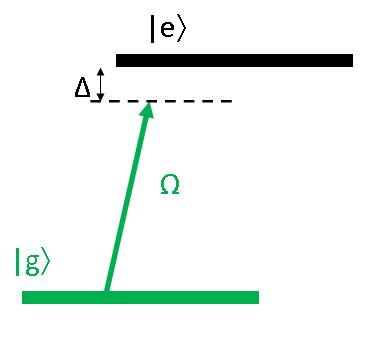}\\
  \caption{Two-Level Atom Coupling to a Light Field}\label{fig:TLAtom}
\end{figure}
The first term of $H_1$ always commutes with $H_0$.  We used $\hat\sigma_+\hat\sigma_z=(\hat\sigma_z-2)\hat\sigma_+$ and $\hat\sigma_-\hat\sigma_z=(\hat\sigma_z+2)\hat\sigma_-$, or Baker-Campbell-Hausdorff formula in the second step.  Omitting the fast oscillating terms in the fourth step is so called RWA.  The interacting Hamiltonian can also be written as $ H_I=\frac{1}{2}\hbar \Omega'\hat\sigma_{\mathbf{n}}$, where $\mathbf{n}=(\frac{\Omega}{\Omega'} \cos\phi, \frac{\Omega}{\Omega'} \sin\phi, \frac{\Delta}{\Omega'})$, which has an easy geometrical interpretation.  In the interaction picture, the Bloch vector corresponding to the state rotates around axis in direction of $\mathbf{n}$ by angular velocity $\Omega'$.  For $\Delta \ll \omega\approx\omega_{eg}$, the result of semiclassical approach agrees with the one obtained by full quantum method.

So far, we neglected the orbital motion of the atom in the real space.  The phase of the laster $\phi$ and detuning angle $\theta$ are space dependent.  When the atom moves in the light field, they will influence its orbital motions.  We shall show bellow that this kind of influence can be described as a vector potential in adiabatic limit. For two level system, with initial condition in one of the internal eigenstate, this vector potential is an Abelian gauge potential; the atom motion was modified as if there is a magnetic field.

\section{Artificial Gauge fields in Cold Atoms}\label{sec:ArtificialGauge}
\subsection{Tow Level Atom in a Light Beam}\label{sec:TLA}
As we have shown in the previous sections, two-level atoms can be described by Jaynes-Cummings's model.  We may denote \{$|g\rangle$, $|e\rangle$\}  a basis of the two-dimensional Hilbert space of the internal degree of freedom. The coupling of the internal degree of freedom with the light field under RWA can be described by a 2 by 2 matrix of the form
\begin{equation}\label{eq:JCRabi}
U=\frac{\hbar \Omega'}{2}
\begin{pmatrix}
  \cos \theta             &    e^{-i\phi} \sin \theta \\
  e^{i\phi} \sin \theta   &    -\cos \theta
\end{pmatrix},
\end{equation}
where $\Omega'$ is the generalized Rabi frequency, $\theta$ and $\phi$ are two position dependent angle parameters (as shown in Eq.~\ref{eq:HamI}). Therefore the total Hamiltonian of the atom moving in the light beam is
\begin{equation}\label{eq:totHam}
  H=\left( \frac{\mathbf{P}^2}{2M} +V \right)\hat{\mathbb{I}} +U,
\end{equation}
where $M$ is the mass of the atom, $\mathbf{P}=-i\hbar\nabla$ is the total momentum operator and $\hat{\mathbb{I}}$ is the identity in internal space.\\

At any point $\mathbf{r}$, $U$ has two eigenstates
\begin{equation}\label{eq:inteignvect}
  \begin{split}
  |\chi_1\rangle =&
    \begin{pmatrix}
      \cos (\theta/2)\\
      e^{i\phi} \sin (\theta/2)
    \end{pmatrix}, \\
  |\chi_2\rangle =&
    \begin{pmatrix}
      -e^{-i\phi} \sin (\theta/2)\\
      \cos (\theta/2)
    \end{pmatrix},
  \end{split}
\end{equation}
with eigenvalues $\hbar\Omega'/2$ and $-\hbar\Omega'/2$ respectively
\footnote{For brevity of notions we may denote simply $\Omega$ for $\Omega'$ here as long as there is no ambiguity of the meaning of Rabi frequency; and in most cases we consider in this article are of near resonance $\Delta\ll\omega\sim\omega_{eg}$. }.  A remark concerning state bases of $\{|e,\tilde{\mathbf{E}}\rangle, |g,\tilde{\mathbf{E}'}\rangle\}$ and $\{|\chi_1\rangle, |\chi_2\rangle\}$ deserves to be emphasized here.  We have shown in the previous section, when the frequency of the laser is near resonance and dipole coupling is strong, the off diagonal elements in Eq.~\ref{eq:JCRabi} drive the internal states of the atom oscillating back and forth with absorbing and emitting a photon in resonance.  Therefore, $|e\rangle$ and $|g\rangle$ are no longer convenient basis to describe the internal state of the atoms.  On the contrary, states given in Eq.~\ref{eq:inteignvect} are eigenvectors of $U$, thus we can diagonalize $U$ by a unitary transform $S$ from the basis of $\{|e,\tilde{\mathbf{E}}\rangle, |g,\tilde{\mathbf{E}'}\rangle\}$ to $\{|\chi_1\rangle, |\chi_2\rangle\}$, and then the total Hamiltonian is diagonal in internal space.  This perspective is also called ``Dressed-Atom Approach''~\cite{Dalibard1985}.  When the coupling decrease to zero, the dressed states come back to $|e\rangle$ and $|g\rangle$.

However, from Eq.~\ref{eq:inteignvect}, we may observe these dressed states are position dependent. When the atom is coupling with the field, they are not degenerate (separated by a finite gap $2|V|$). Under adiabatic limit, if the atom is in one of the dressed eigenstate and the orbital motion of atom is slow enough, it shall always remain in the submanifold of that state, while position-dependent projections will contribute to the orbital wave function a phase factor in addition to the ordinary dynamical phase factor, which leads to the gauge field at last.

\begin{widetext}
Now let us expand the total wave function in term of these local basis as
\begin{equation}\label{eq:totwavefunc}
  |\Psi (\mathbf{r},t)\rangle = \sum_{j=1,2}\psi_j (\mathbf{r},t)|\chi_j\rangle,
\end{equation}
where $\psi_j (\mathbf{r},t),\ j=1,2$ are time dependent wave functions
\footnote{In case the clumsy summing signs, we may adopt Einstein's convention below; while the summing signs are sometimes written explicitly for carefulness. Both sub- and superscripts are used for convenience; since we do live well in three dimentional euclidean space with standard metric, they are in fact the same thing.}
.
Apply the total Hamiltonian (Eq.~\ref{eq:totHam}) to our total wave function, we have our time dependent Schr\"odinger equation
\begin{equation}\label{eq:totSchroeEq}
\begin{split}
  i\hbar \frac{\partial}{\partial t} |\Psi (\mathbf{r},t)\rangle
   =&H |\Psi (\mathbf{r},t)\rangle = \left[ \left( \frac{\mathbf{P}^2}{2M} +V \right)\hat{\mathbb{I}} +U \right]|\Psi (\mathbf{r},t)\rangle.
\end{split}
\end{equation}
Since the eigenstates $|\chi_j\rangle$ of $U$ are position dependent, $\mathbf{P}$ may act on both part of the wave function, differentiating with respect to $\mathbf{r}$.  The second term $V$ represents trap potential that is diagonal both in internal space and real space (we shall assume that $|e\rangle$ and $|g\rangle$ experience same trap potential).  The third term acts on the internal eigenstates and gives the corresponding eigenenergies.\\

In the first step, we shall calculate the total momentum operator $\mathbf{P}$ acting on the total wave function $|\Psi (\mathbf{r},t)\rangle$ .

\begin{equation}\label{eq:momentumdiff}
  \begin{split}
    \mathbf{P}|\Psi (\mathbf{r},t)\rangle
    =& -i\hbar\nabla
      \left(\sum_{j=1,2}\psi_j(\mathbf{r},t)|\chi_j\rangle \right)
    = -i\hbar \sum_{j=1,2}
      [\nabla\psi_j(\mathbf{r},t)]|\chi_j\rangle
      +\psi_j(\mathbf{r},t)(\nabla|\chi_j\rangle)\\
    =& -i\hbar \sum_{j,l=1,2}
      [\nabla\psi_j(\mathbf{r},t)]|\chi_l\rangle
      \langle \chi_l |\chi_j\rangle 
      +\psi_j(\mathbf{r},t)|\chi_l\rangle \langle \chi_l| \nabla|\chi_j\rangle\\
    =&\sum_{j,l=1,2}
      [(\mathbf{p}\delta_{lj}-\mathbf{A}_{lj})\psi_j(\mathbf{r},t)]
      |\chi_l\rangle\\
  \end{split}
\end{equation}

where $\mathbf{p}$ is of the form of momentum operator but not acting on the spinor, and $\mathbf{A}_{lj}=i\hbar \langle \chi_l | \nabla|\chi_j\rangle$ is the gauge potential, or in a sense of mathematics, the ``connection''.

Then we can calculate the kinetic energy.

\begin{equation}\label{eq:kineticEtot}
  \begin{split}
    \frac{\mathbf{P}^2}{2M}|\Psi (\mathbf{r},t)\rangle
    =&\frac{1}{2M}\mathbf{P} \left( \sum_{j,l=1,2}
      [(\mathbf{p}\delta_{lj}-\mathbf{A}_{lj})\psi_j(\mathbf{r},t)]
      |\chi_l\rangle \right) \\
    =&\frac{1}{2M}\sum_{j,l,m=1,2} \Bigg(
      \{\mathbf{p}\delta_{ml}[(\mathbf{p}\delta_{lj}-\mathbf{A}_{lj})
      \psi_j(\mathbf{r},t)]\}
      |\chi_m\rangle
      - \{\mathbf{A}_{ml}[(\mathbf{p}\delta_{lj}-\mathbf{A}_{lj})
     \psi_j(\mathbf{r},t)]\}
      |\chi_m\rangle \Bigg)\\
    =&\frac{1}{2M}\sum_{j,l,m=1,2} \Bigg(
      \{(\mathbf{p}\delta_{ml} - \mathbf{A}_{ml} )
      [(\mathbf{p}\delta_{lj}-\mathbf{A}_{lj})\psi_j(\mathbf{r},t)]\}
      |\chi_m\rangle \Bigg),\\
  \end{split}
\end{equation}
\end{widetext}
where $\mathbf{p}$ still only acts on the ``orbital'' part.
Now, projecting it to the eigen space of $\chi_n\rangle$, and rewrite the formula in the form of matrix,
\begin{widetext}
\begin{equation}\label{eq:projchin}
  \begin{split}
    \langle \chi_n| \frac{\mathbf{P}^2}{2M}|\Psi (\mathbf{r},t)\rangle
    =& \frac{1}{2M}\sum_{j,l=1,2} \Bigg(
      \{(\mathbf{p}\delta_{nl} - \mathbf{A}_{nl} )
      (\mathbf{p}\delta_{lj}-\mathbf{A}_{lj})\psi_j(\mathbf{r},t)\}
      \Bigg), \ \text{or}\\
    \langle \chi_n| \frac{\mathbf{P}^2}{2M}|\Psi (\mathbf{r},t)\rangle
    =&  {\Bigg(\frac{{(\mathbf{p}\hat{\mathbb{I}}- \hat{\mathbf{A}})}^2}{2M}\Phi(\mathbf{r},t)
      \Bigg)}_{n},
  \end{split}
\end{equation}
\end{widetext}
where $\hat{\mathbf{A}}$ is a matrix vector, $\Phi=
\begin{pmatrix}
\psi_1(\mathbf{r},t)\\
\psi_2(\mathbf{r},t)
\end{pmatrix}$ is a two component wave function of orbital part, and the subscript $n$ at down right corner indicates the n-th component of the wave function.

The second term and the third term of Eqn.~\ref{eq:totSchroeEq} is easy to obtain
and project to $|\chi_n\rangle$
\begin{equation}\label{eq:projchinpotetspin}
  \begin{split}
    \langle \chi_n| V\hat{\mathbb{I}} |\Psi (\mathbf{r},t)\rangle
    = & {\Bigg( V(\mathbf{r})\Phi(\mathbf{r},t) \Bigg)}_n,\\
     \langle \chi_n| U |\Psi (\mathbf{r},t)\rangle
    = & {\Bigg( \frac{\hbar\Omega}{2}\hat\sigma_z \Phi(\mathbf{r},t) \Bigg)}_n,\\
    &\phantom{=}
  \end{split}
\end{equation}
where $n=1,2$.
Now let us calculate matrix vector $\hat{\mathbf{A}}$.
According to our definition $\mathbf{A}_{lj}=i\hbar \langle \chi_l | \nabla|\chi_j\rangle$, the a-th component of $\hat{\mathbf{A}}$ is
\begin{equation*}
  {\hat{\mathbf{A}}}_a=i\hbar \langle \chi_l | \partial_a |\chi_j\rangle=i\hbar
  \begin{pmatrix}
  \langle \chi_1 | \partial_a |\chi_1\rangle & \langle \chi_1 | \partial_a |\chi_2\rangle \\
  \langle \chi_2 | \partial_a |\chi_1\rangle & \langle \chi_2 | \partial_a |\chi_2\rangle
  \end{pmatrix}.
\end{equation*}
We shall always use $a,b,c...$ for the indices of vector components, and $i,j,k,l,m,n...$ for indices of matrix.
Its entries are
\begin{widetext}
\begin{equation}\label{eq:entriesA}
  \begin{split}
    i\hbar \langle \chi_1 | \partial_a |\chi_1\rangle =& i\hbar
      \begin{pmatrix} \cos \frac{\theta}{2} &  e^{-i\phi} \sin \frac{\theta}{2}\end{pmatrix}
      \left(
      \begin{pmatrix}
        -\sin \frac{\theta}{2} \\
        e^{i\phi} \cos \frac{\theta}{2}
      \end{pmatrix}\left[\frac{\partial_a \theta}{2}\right]
      +\begin{pmatrix}
        0\\
        e^{i\phi} \sin \frac{\theta}{2}
      \end{pmatrix}[{i\partial_a \phi}]\right)\\
      =& -\hbar \sin^2 \left(\frac{\theta}{2}\right) [\partial_a \phi]= \frac{\hbar}{2} (\cos {\theta}-1) [\partial_a \phi]\\
    i\hbar \langle \chi_2 | \partial_a |\chi_2\rangle =& i\hbar
      \begin{pmatrix} -e^{i\phi} \sin \frac{\theta}{2} & \cos \frac{\theta}{2} \end{pmatrix}
      \left(
      \begin{pmatrix}
        -e^{-i\phi} \cos \frac{\theta}{2}\\
        -\sin \frac{\theta}{2}
      \end{pmatrix}\left[\frac{\partial_a \theta}{2}\right]
      +\begin{pmatrix}
        e^{-i\phi} \sin \frac{\theta}{2}\\
        0
      \end{pmatrix}[{i\partial_a \phi}]\right)\\
      =& \hbar \sin^2 \left(\frac{\theta}{2}\right) [\partial_a \phi]= \frac{\hbar}{2} (1-\cos {\theta}) [\partial_a \phi]\\
    i\hbar \langle \chi_1 | \partial_a |\chi_2\rangle =& i\hbar
      \begin{pmatrix} \cos \frac{\theta}{2} &  e^{-i\phi} \sin \frac{\theta}{2}\end{pmatrix}
      \left(
      \begin{pmatrix}
        -e^{-i\phi} \cos \frac{\theta}{2}\\
        -\sin \frac{\theta}{2}
      \end{pmatrix}\left[\frac{\partial_a \theta}{2}\right]
      +\begin{pmatrix}
        e^{-i\phi} \sin \frac{\theta}{2}\\
        0
      \end{pmatrix}[{i\partial_a \phi}]\right)\\
      =& -i\hbar e^{-i\phi}\left[\frac{\partial_a \theta}{2}\right]+i\hbar e^{-i\phi}\cos \frac{\theta}{2}\sin \frac{\theta}{2}[i\partial_a \phi]\\
      =&-\frac{\hbar}{2} e^{-i\phi}
      (i\partial_a \theta +(\sin\theta )\partial_a \phi)\\
    i\hbar \langle \chi_2 | \partial_a |\chi_1\rangle =&{\Big( -i\hbar \langle  \partial_a \chi_1 | \chi_2\rangle \Big)}^*
    ={\Big( i\hbar \langle  \chi_1 | \partial_a |\chi_2\rangle \Big)}^*.
  \end{split}
\end{equation}
So
\begin{equation}\label{eq:matA}
\begin{split}
{\hat{\mathbf{A}}}_a=&
\begin{pmatrix}
  -\hbar \sin^2 \left(\frac{\theta}{2}\right) [\partial_a \phi] &
  -\frac{\hbar}{2} e^{-i\phi} (i\partial_a \theta +(\sin\theta )\partial_a \phi)\\
  \frac{\hbar}{2} e^{i\phi}(i\partial_a \theta -(\sin\theta )\partial_a \phi)
  & \hbar \sin^2 \left(\frac{\theta}{2}\right) [\partial_a \phi]
\end{pmatrix}\\
  =&-\frac{\hbar}{2} ((\sin\phi)\partial_a \theta +(\sin\theta \cos \phi)\partial_a \phi) \hat{\sigma}_x
  +\frac{\hbar}{2} ((\cos \phi)\partial_a \theta  -(\sin\theta \sin \phi)\partial_a \phi) \hat{\sigma}_y
  -\hbar \sin^2 \left(\frac{\theta}{2}\right) [\partial_a \phi] \hat{\sigma}_z
\end{split}
\end{equation}
who is Hermitian.

Furthermore
\begin{equation*}
  \begin{split}
  \ {\hat{\mathbf{A}}}_a^2=&\begin{pmatrix}
    -\hbar \sin^2 \left(\frac{\theta}{2}\right) [\partial_a \phi]
    & -\frac{\hbar}{2} e^{-i\phi} (i\partial_a \theta +(\sin\theta )\partial_a \phi)\\
    \frac{\hbar}{2} e^{i\phi}(i\partial_a \theta -(\sin\theta )\partial_a \phi)
    & \hbar \sin^2 \left(\frac{\theta}{2}\right) [\partial_a \phi]
  \end{pmatrix}^2\\
  =&\begin{pmatrix}
    \hbar^2 \sin^4 \left(\frac{\theta}{2}\right){[\partial_a \phi]}^2+\frac{\hbar^2}{4}({(\partial_a \theta)}^2+\sin^2 \theta {[\partial_a \phi]}^2)
    & 0\\
    0
    & \hbar^2 \sin^4 \left(\frac{\theta}{2}\right){[\partial_a \phi]}^2+\frac{\hbar^2}{4}({(\partial_a \theta)}^2+\sin^2 \theta {[\partial_a \phi]}^2)
  \end{pmatrix}\\
  =&\begin{pmatrix}
    \hbar^2 \sin^2 \left(\frac{\theta}{2}\right){[\partial_a \phi]}^2+\frac{\hbar^2}{4}{(\partial_a \theta)}^2
    & 0\\
    0
    & \hbar^2 \sin^2 \left(\frac{\theta}{2}\right){[\partial_a \phi]}^2+\frac{\hbar^2}{4}{(\partial_a \theta)}^2
  \end{pmatrix}
  \end{split}
\end{equation*}
and
\begin{equation}\label{eq:matAsq}
\begin{split}
  {{\hat{\mathbf{A}}}}^2
  =&\sum_{a=1}^{3}{{\hat{\mathbf{A}}}_a}^2
  =\left( \hbar^2 \sin^2\left(\frac{\theta}{2}\right){|\nabla \phi|}^2+\frac{\hbar^2}{4}{|\nabla \theta|}^2\right)\hat{\mathbb{I}}\\
  {{\hat{\mathbf{A}}}}^2_{\phantom{2}jk}
  =&{\hat{\mathbf{A}}}_{jl}\cdot{\hat{\mathbf{A}}}_{lk}
\end{split}
\end{equation}


Thus, with Eq.~\ref{eq:totwavefunc}, and projecting the left hand side of Eq.~\ref{eq:totSchroeEq} to $|\chi_n\rangle$, we have
\begin{equation}\label{eq:resultl}
  \begin{split}
     \langle \chi_n| i\hbar \frac{\partial}{\partial t} |\Psi (\mathbf{r},t)\rangle
     =&\langle \chi_n| i\hbar \frac{\partial}{\partial t}  \left(\sum_{j=1,2}\psi_j (\mathbf{r},t)|\chi_j\rangle \right)
     =\langle \chi_n| \left(\sum_{j=1,2}[i\hbar \frac{\partial}{\partial t} \psi_j (\mathbf{r},t)]|\chi_j\rangle \right)\\
     =&i\hbar \frac{\partial}{\partial t} \psi_n (\mathbf{r},t)
     =i\hbar \frac{\partial}{\partial t}
       {\Bigg(\Phi(\mathbf{r},t)\Bigg)}_n,
  \end{split}
\end{equation}
where adiabatic assumption has been used\footnote{So far, every thing is exact. The adiabatic assumption just suggests that under this projection, if initial state is one of the energy eigenstates, it will always remain in this submanifold.}.
On the other hand, with the help of Eqn.~\ref{eq:projchin} and ~\ref{eq:projchinpotetspin}, we have
\begin{equation}\label{eq:resultr}
  \begin{split}
     \langle \chi_n| H |\Psi (\mathbf{r},t)\rangle
     =&\langle \chi_n|  \left[ \left( \frac{\mathbf{P}^2}{2M} +V \right)\hat{\mathbb{I}} +U \right]|\Psi (\mathbf{r},t)\rangle\\
     =&\langle \chi_n| \frac{\mathbf{P}^2}{2M}|\Psi (\mathbf{r},t)\rangle
       +\langle \chi_n| V 
         |\Psi (\mathbf{r},t)\rangle
       + \langle \chi_n| U |\Psi (\mathbf{r},t)\rangle\\
     =&{\Bigg(\frac{{(\mathbf{p}\hat{\mathbb{I}}
         -\hat{\mathbf{A}})}^2}{2M}\Phi(\mathbf{r},t)\Bigg)}_{n}
      +{\Bigg( V(\mathbf{r})\Phi(\mathbf{r},t) \Bigg)}_n
      +{\Bigg( \frac{\hbar\Omega}{2}\sigma_z
         \Phi(\mathbf{r},t)\Bigg)}_n.
  \end{split}
\end{equation}
Combining Eq.~\ref{eq:resultl} and ~\ref{eq:resultr}, we arrive at
\begin{equation}\label{eq:result1}
  \begin{split}
     i\hbar \frac{\partial}{\partial t}
       \begin{pmatrix}\psi_1(\mathbf{r},t)\\ \psi_2(\mathbf{r},t) \end{pmatrix}
     =&\frac{{(\mathbf{p}\hat{\mathbb{I}}
         -\hat{\mathbf{A}})}^2}{2M}\begin{pmatrix}\psi_1(\mathbf{r},t)\\ \psi_2(\mathbf{r},t) \end{pmatrix}
      + V(\mathbf{r})\begin{pmatrix}\psi_1(\mathbf{r},t)\\ \psi_2(\mathbf{r},t) \end{pmatrix}
      + \frac{\hbar\Omega}{2}\sigma_z
         \begin{pmatrix}\psi_1(\mathbf{r},t)\\ \psi_2(\mathbf{r},t) \end{pmatrix}\\
     =&\frac{{(\mathbf{p}\hat{\mathbb{I}}
         -\hat{\mathbf{A}})}^2}{2M}\begin{pmatrix}\psi_1(\mathbf{r},t)\\ \psi_2(\mathbf{r},t) \end{pmatrix}
      + \begin{pmatrix}V(\mathbf{r})& 0\\0 & V(\mathbf{r}) \end{pmatrix}
      \begin{pmatrix}\psi_1(\mathbf{r},t)\\ \psi_2(\mathbf{r},t) \end{pmatrix}
      + \begin{pmatrix}\frac{\hbar\Omega}{2}& 0\\0 & -\frac{\hbar\Omega}{2} \end{pmatrix}
         \begin{pmatrix}\psi_1(\mathbf{r},t)\\ \psi_2(\mathbf{r},t) \end{pmatrix}.
  \end{split}
\end{equation}
It is obvious to observe that the last two terms on the right hand side of Eqn.~\ref{eq:result1} are diagonal, while carefulness is needed to calculate the first term, since $\mathbf{p}$ does not commute with $\hat{\mathbf{A}}$.  Without lose of generality, we may observed the first component of the wave function.  Although $\hat{\mathbf{A}}^2$ is diagonal, $\hat{\mathbf{A}}$ itself is not, thus two components may be well coupled through the off-diagonal entries of $\hat{\mathbf{A}}$. \\

We shall now calculate the first term explicitly
\footnote{One should pay attention to the calculations. $\mathbf{p}$ is not only diagonal matrix, but also a differential operator who acts both on wave function and $\hat{\mathbf{A}}$.  We shall well keep their order during the calculation.}.\\

Notice that
\begin{equation}\label{eq:pAApphi}
  \begin{split}
    (\mathbf{p}\cdot\hat{\mathbf{A}}+\hat{\mathbf{A}}\cdot\mathbf{p})\Phi
    =&\left(
    \begin{pmatrix}
      \mathbf{p} \cdot \mathbf{A}_{11}
      &\mathbf{p}\cdot \mathbf{A}_{12}\\
      \mathbf{p} \cdot \mathbf{A}_{21}
      &\mathbf{p}\cdot \mathbf{A}_{22}
    \end{pmatrix}+
    \begin{pmatrix}
       \mathbf{A}_{11} \cdot \mathbf{p}
       &\mathbf{A}_{12}\cdot \mathbf{p}\\
       \mathbf{A}_{21} \cdot \mathbf{p}
       &\mathbf{A}_{22}\cdot \mathbf{p}
    \end{pmatrix}
      \right)
      \begin{pmatrix}\psi_1(\mathbf{r},t)\\ \psi_2(\mathbf{r},t)\end{pmatrix}\\
    =&\begin{pmatrix}
    (\mathbf{p} \cdot \mathbf{A}_{11}+\mathbf{A}_{11} \cdot \mathbf{p}) \psi_1(\mathbf{r},t) +
    (\mathbf{p} \cdot \mathbf{A}_{12}+\mathbf{A}_{12} \cdot \mathbf{p}) \psi_2(\mathbf{r},t) \\
    (\mathbf{p} \cdot \mathbf{A}_{21}+\mathbf{A}_{21} \cdot \mathbf{p}) \psi_1(\mathbf{r},t) +
    (\mathbf{p} \cdot \mathbf{A}_{22}+\mathbf{A}_{22} \cdot \mathbf{p}) \psi_2(\mathbf{r},t) \\\end{pmatrix}\\
  \end{split}
\end{equation}

\begin{equation}\label{eq:redKE}
\begin{split}
  \frac{{(\mathbf{p}\hat{\mathbb{I}}-\hat{\mathbf{A}})}^2}{2M}
     \begin{pmatrix}\psi_1(\mathbf{r},t)\\ \psi_2(\mathbf{r},t)\end{pmatrix}
   =&\frac{1}{2M}{(\mathbf{p}^2
   -\mathbf{p}\cdot\hat{\mathbf{A}}
   -\hat{\mathbf{A}}\cdot\mathbf{p}
   +\hat{\mathbf{A}}^2)} \begin{pmatrix}\psi_1(\mathbf{r},t)\\ \psi_2(\mathbf{r},t)\end{pmatrix}\\
   =&\frac{1}{2M}
   \begin{pmatrix}
   {(\mathbf{p}-\mathbf{A}_{11})}^2-\mathbf{A}_{11}\cdot\mathbf{A}_{11}
   +{(\hat{\mathbf{A}}^2)}_{\phantom{a}11}
   & -(\mathbf{p}\cdot\mathbf{A}_{12}+\mathbf{A}_{12}\cdot\mathbf{p})\\
   -(\mathbf{p}\cdot\mathbf{A}_{21}+\mathbf{A}_{21}\cdot\mathbf{p})
   &{(\mathbf{p}-\mathbf{A}_{22})}^2-\mathbf{A}_{22}\cdot\mathbf{A}_{22}
   +{(\hat{\mathbf{A}}^2)}_{\phantom{a}22}
   \end{pmatrix}
   \begin{pmatrix}\psi_1(\mathbf{r},t)\\ \psi_2(\mathbf{r},t)\end{pmatrix},
\end{split}
\end{equation}
\end{widetext}

With Eq.~\ref{eq:entriesA} or Eq.~\ref{eq:matAsq}, we have diagonal term of Eq.~\ref{eq:redKE} are
\[
\frac{{(\mathbf{p}-\mathbf{A}_{ii})}^2}{2M}+W,
\]
where
$W=\frac{1}{2M}{({(\hat{\mathbf{A}}^2)}_{\phantom{a}ii}-\mathbf{A}_{ii}\cdot\mathbf{A}_{ii})}
=\frac{\hbar^2}{8M}({|\nabla\theta|}^2+{\sin^2\theta|\nabla\phi|}^2)$.
The off-diagonal terms
\begin{equation}\label{eq:couple}
\begin{split}
  -\frac{1}{2M}(\mathbf{p}\cdot\mathbf{A}_{12}
  +\mathbf{A}_{12}\cdot\mathbf{p})
\end{split}
\end{equation}
 couple the two components of the wave function $(\Phi)$.

If we further assume that initially $\psi_2=0$, and take the adiabatic limit, we can arrive at Schr\"odinger equation for the wave function in $|\chi_1\rangle$ subspace
\begin{equation}\label{eq:redResult}
  i\hbar \frac{\partial \psi_1}{\partial t}=
  \left[
  \frac{{(\mathbf{p}-\mathbf{A})}^2}{2M}+V+\frac{\hbar \Omega}{2}+W
  \right] \psi_1,
\end{equation}
where $\mathbf{A}=\mathbf{A}_{11}$ and $\mathbf{A}$ plays a role of artificial Abelian gauge field.

A remark concerning the adiabatic approximation deserves here. By definition, off-diagonal elements of connection $\mathbf{A}_{ij}=i\hbar \langle \chi_i |\nabla | \chi_j \rangle \sim i \hbar {\langle \chi_i |\frac{\partial H}{\partial \mathbf{r}} | \chi_j \rangle}/{(E_j-E_i)}$
\footnote{$\langle \chi_i |\frac{\partial H}{\partial \mathbf{r}}| \chi_j \rangle = {\langle \chi_i | \nabla | \chi_j \rangle}{(E_j-E_i)}$, see Eq.~8 of Ref.~\citenum{Berry1984}}.  Thus the coupling term in Eq.~\ref{eq:couple} is of order of $\mathbf{v}\cdot\mathbf{A}_{ij}\sim i\hbar {\langle \chi_i |\frac{\partial H}{\partial t} | \chi_j \rangle}/{(E_j-E_i)}$ that vanishes as first order in adiabatic limit. Since the initial value of $\psi_2$ is zero, its amplitude also vanishes as first order, and its contribution to $\psi_1$ through the off-diagonal terms vanishes as second order under adiabatic limit, which validates our adiabatic approximation.  Therefore projecting to one of the submanifolds, we arrive at the non-Abelian gauge theory as shown in Eq.~\ref{eq:redResult}.  This estimation also gives a criteria for adiabatic approximation in real experiments.  The typical orbital energy changes of order $\delta V=V(\mathbf{r})-V(\mathbf{r}+\mathbf{v}\delta t)$ that should be much smaller than the gap $E_2-E_1$ of the dressed-states.  This means that our atom should move ``slow'' enough, which is only a ``na\"ive'' sense of adiabatic approximation.

A further remark is that, in fact, our toy two-level dressed-atom model did exhibit full non-Abelian characters that is obvious from Eq.~\ref{eq:result1} (an example of the transformation of the connection $\mathbf{A}$ is verified in Proposition~\ref{thm:nonabtrans} in appendix).  However, our adiabatic limit guarantees that if initially our system is in an eigenstate of dressed state, it shall always remain in that sub-manifold.  But if the gap of two states are not large enough, this simple projection will no longer be valid, while the non-Abelian gauge features in Eq.~\ref{eq:result1} still hold.  More generally, if our system is initially in a sub-manifold spanned by several quasi degenerate states whereas this sub-manifold is well separated from other sub-manifolds by large gaps, which enables us to apply adiabatic projection, then we can construct artificially  non-Abelian schemes in our atom-light systems.

Unfortunately, the scheme of our toy model is not very practical.  The excited state of the atom may spontaneously decay.  The collisions between the atom will also cause some problems.  In the following sections, we shall introduce several more practical schemes for realizing artificial gauge fields in cold atoms, and we shall also analyze carefully the problems we may encountered.  Nevertheless, the idea is just a simple generalization that shares almost all the key factors that we have shown in our toy model.

\subsection{One, Two, Three to N...}\label{sec:NonAbelian}
\begin{figure}
  \centering
  \includegraphics[scale=0.32]{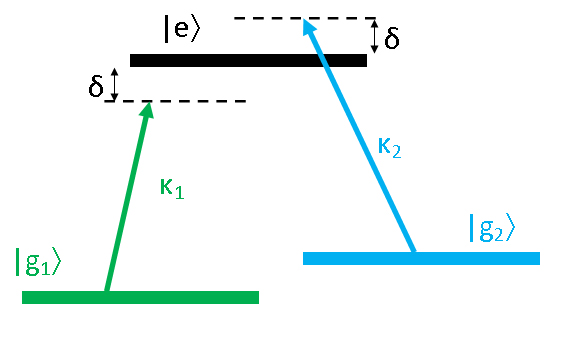}\\
  \caption{Atomic $\Lambda$-level Structure}\label{fig:lambdascheme}
\end{figure}
In previous section, we coupled one ground state to an excited state by dipole interaction, where the bare states of atom are dressed by the light field.  When we project to the sub-manifold of dressed states with adiabatic approximation, we obtain an artificial Abelian gauge field induced by the  space-dependent Rabi frequency and detuning.  We can simply generalize our scheme by coupling two quasi-degenerate ground states to another state.  The scheme was shown in Fig.~\ref{fig:lambdascheme}, where two state $|g_1\rangle$ and $|g_2\rangle$ are coupled to $|e\rangle$ by two laser beams. The laser couplings are subjected to selection rules, so we can choose proper polarization of lasers to control them separately.  As shown in previous section, the total coupling Hamiltonian under RWA can be written as
\begin{equation}\label{eq:Ham3level}
  H_I=\frac{\hbar}{2}
  \begin{pmatrix}
   -2\delta        &    \kappa_1^*       &   \ 0\\
   \kappa_1        &    0                &   \ \kappa_2\\
   0               &    \kappa_2^*       &   \phantom{-}2\delta
  \end{pmatrix},
\end{equation}
where $\kappa_i, i=1,2$ are complex space dependent Rabi frequencies, $\pm2\delta$ are detunings of photon excitation with respect to the Raman resonance~\cite{DalibardArtificialGauge}.
Now we consider the resonant case, i.e. $2\delta=0$.  Rewrite this Hamiltonian in Dirac bra-ket notation
\begin{equation*}
  \begin{split}
    H_I=&\frac{\hbar}{2}(\kappa_1|e\rangle\langle g_1|
        +\kappa_2|e\rangle\langle g_2|)+h.c. \\
       =& \frac{\hbar}{2}(\kappa|e\rangle\langle B|)+h.c. ,
  \end{split}
\end{equation*}
where $|B\rangle=(\kappa_1^*|g_1\rangle+\kappa_2^*|g_2\rangle)/\kappa$, and $\kappa=\sqrt{{|\kappa_1|}^2+{|\kappa_2|}^2}$.  By this notation, it is obvious that the interaction only couples $|e\rangle$ and $|B\rangle$ together. But now we have a Hilbert space spanned by three independent state vector.  So the third state
\begin{equation}\label{eq:Darkstate}
  |D\rangle=(\kappa_2|g_1\rangle-\kappa_1|g_2\rangle)/\kappa,
\end{equation}
who is orthogonal and uncoupled to $|e\rangle$ and $|B\rangle$ subspace has an eigenenergy of zero, which is called \emph{Dark State}.  The other two eigenstates span the same subspace of $<|e\rangle, |B\rangle>$.  In this submanifold, the coupling Hamiltonian reduces to a $2\times2$ matrix that is completely same as the one we solved in the previous section.  By diagonalize the Hamiltonian, we find they are $|\pm\rangle=(|e\rangle+|B\rangle)/\sqrt{2}$ with eigenenergies $E_\pm=\pm\hbar\kappa/2$ respectively.  And $|B\rangle$ is called \emph{Bright State}.

Since $|D\rangle$ is well separated from the other two eigenstates by an energy gap of $E_g=\hbar\kappa/2$, by adiabatic approximation and projection, it plays the same role as the dressed states in previous section. On the other hand, $|D\rangle$ is orthogonal to $|e\rangle$, i.e. no population in is $|e\rangle$, thus atom state is not affected by spontaneous emission.  In fact, this remarkable property was already well-known in quantum optics such as subrecoil cooling, Electromagnetically Induced Transparency (EIT) and STImulated Raman Adiabatic Passage(STIRAP).  These applications rely on the robustness of $|D\rangle$ with respect to the decoherence caused by spontaneous emission~\cite{DalibardArtificialGauge}.

By little laborious calculation as we have done in previous section, we obtained effective Equation of Motion for orbital part projected to $|D\rangle$ sub-manifold
\begin{equation}\label{eq:EoMLambda}
  i\hbar \frac{\partial \psi_D}{\partial t}
  =\left[
  \frac{{(\mathbf{p}-\mathbf{A})}^2}{2M}+V+W
  \right] \psi_D,
\end{equation}
where $\mathbf{A}=i\hbar\langle D|\nabla|D\rangle$  and $W=\hbar^2{|\langle B| \nabla |D\rangle |}^2/2M$ are the effective vector and scalar potential induced by the space-dependent dark state.

For pedagogical purposes, we do the calculations in detail again, but by the language of differential form for comparison.  The antisymmetrical properties and compact form of exterior differential operator will save us from the tedious component notations and simplify our calculations.

First, project our total wave function to the eigen basis of dressed atom as in Eq.~\ref{eq:totHam}
\begin{equation}\label{eq:3Dproj}
  |\Psi (\mathbf{r},t)\rangle = \sum_{j=D,\pm}\psi_j (\mathbf{r},t)|\chi_j\rangle.
\end{equation}
Since $P$ acts on both part of the wave function, as shown in Eq.~\ref{eq:momentumdiff}, we can easily arrive at the EoM of the form in analog to Eq.~\ref{eq:result1}, but the wave function has three components and all the matrices are $3\times3$.  In analog to Eq.~\ref{eq:redKE}, the off-diagonal terms coupling $|D\rangle$ to the other states are neglected due to the adiabatic approximation. Diagonal terms of $\hat{\mathbf{A}}$ and $\hat{\mathbf{A}}^2$ contribute to the vector and scalar potential.

Mathmatically, adiabatic assumption enables us to project the total wave function to the local basis of eigenstates of atom-light couplings.  This in a sense define a frame bundle $F$ over the adiabatic-parameter manifold (here the adiabatic parameter, also called base space or base manifold in mathematics, is the real position, dimension $d=2 \ or\  3$ for the cases discussed in this article).  Local frame is $n$-dimensional ($n=2$ for our toy model, $n=3$ for $\Lambda$-scheme, and $n=N+1$ for the N-pod scheme to be discussed). $\mathbf{P}/(-i\hbar)$ is the derivative operator.  As it acts on the orbital part, it becomes $\mathbf{p}/(-i\hbar)$ as a differential operator, acting on the wave function, giving a 1-form corresponding to a ``vector''.  $\mathbf{P}/(-i\hbar)$ acts on the frame basis by natural derivative of wave function with respect to the adiabatic parameter defining a connection over the frame bundle $F$, and $\omega=\mathbf{A}/(i\hbar)$ is the connection matrix.  The connection matrix is frame basis dependent.  A transform of basis will induce a transform of connection matrix which corresponds to the gauge transform physically.  Here the frame is a basis of orthonormal wave functions, and we would only consider a transform to another orthonormal basis, i.e. a \emph{Unitary Transform}.  More specifically, in most cases, we need only consider $SU(N)$ transform.  Attaching this group to every point of the adiabatic=parameter manifold gives a principal bundle $P$, correspondingly the structure group $SU(N)$ also called gauge group in physics.  A local gauge transform is a smooth section of the principal bundle inducing a bundle morphism from $F$ to itself, i.e. a transform over every fibre of $F$.  Physically and mathematically, $\mathbf{A}$ is gauge dependent or frame dependent and it does not transform as a vector but as $\mathbf{A}'=U^\dagger\mathbf{A}U+i\hbar U^\dagger dU$\footnote{A more familiar example to physicists might be the Christoffel in general relativity.}.  Mathematically one can define a homogeneously transforming 2-from associated to the connection
\begin{equation}\label{eq:curvature}
  \Omega=d\omega+\omega\wedge\omega,
\end{equation}
which is called curvature.  Similarly, we can also define a physical quantity corresponding to curvature
\begin{equation}\label{eq:nonAbelianGaugeField}
  \mathbf{B}=d\mathbf{A}+\frac{1}{i\hbar}\mathbf{A}\wedge\mathbf{A},
\end{equation}
which is gauge field\footnote{This gives an intuitive clue to the definition of non-Abelian gauge field. In fact, Yang and Mills~\cite{YangMills} had tried tedious calculations and conter-intuitive tests, and finally find this generalization to non-Abelian gauge field. However by the analog here, this definition is quite nature, because an observable physical quantity should be gauge independent and transform homogeneously.  In simple $U(1)$ case, $\mathbf{A}$ is a 1-form rather than a matrix of 1-form, the second term of $\mathbf{A}\wedge\mathbf{A}$ vanishes naturally due to the antisymmetrical property of wedge and $\mathbf{B}=d\mathbf{A}$ exactly gives the ordinary electromagnetic field.}.

Now let us be back from this long deviation to mathematical jargons.\footnote{Although physicists who are not familiar with differential geometry may consider these strange words of mathematics bizarre and unnecessary, the author find it helpful for both mathematician and physicists to learn the language of each other's.  Therefore we include this short mathematics-physics dictionary here to span a bridge between mathematics and physics.  In this article, we demonstrated calculations by both ordinary physicists using language and mathematician's language.  It would bring us more convenience by using them properly in specific applications.  Readers who need more references in differential geometry may resort to Ref.~\citenum{Taubes} and ~\citenum{Chern}.} For convenience we shall rewrite dark state as
\begin{equation}\label{eq:Darkstate2}
\begin{split}
  |D\rangle=&\frac{\kappa_2|g_1\rangle-\kappa_1|g_2\rangle}
    {\sqrt{{|\kappa_1|}^2+{|\kappa_2|}^2}} \\
           =&\frac{e^{i\phi_2}|g_1\rangle-|\zeta|e^{i\phi_1}|g_2\rangle}
    {\sqrt{1+{|\zeta|}^2}},
\end{split}
\end{equation}
where $\zeta=\kappa_1/\kappa_3=|\zeta|e^i\phi,\ \phi=\phi_1-\phi_2$.  Since now we have no other eigenstates in this degenerate space, we can multiply Eq.~\ref{eq:Darkstate2} by a phase factor $e^{-i\phi_2}$ to make our life easier.  This local gauge transform does not affect curvature, but do changes connection coefficients.  For the consideration of consistency, we shall keep our notation, try to calculate the more complex version.
$\nabla|D\rangle$ is the covariant derivative of the frame basis.  Since we have chosen our basis as the dressed states for every point, then the covariant derivative is naturally defined in this way implied by adiabatic assumption (Physically, it is natural for us to assume $\{|g_i\rangle, |e\rangle\},i=1,2$, states of bare atom, as a flat frame).
\begin{equation}\label{eq:DDark}
  \begin{split}
    \nabla|D\rangle =&|\nabla D\rangle =
    -\frac{|\zeta|d|\zeta|}{1+{|\zeta|}^2}|D\rangle\\  &+\frac{ie^{i\phi_2}d\phi_2|g_1\rangle
    -e^{i\phi_1}(i|\zeta|d\phi_1+d|\zeta|)|g_2\rangle}
    {\sqrt{1+{|\zeta|}^2}}\\
    \langle D|
    =&\frac{e^{-i\phi_2}\langle g_1| -|\zeta|e^{-i\phi_1}| \langle g_2|}
    {\sqrt{1+{|\zeta|}^2}}\\
    \nabla\langle D| =&\langle \nabla D| =
    -\frac{|\zeta|d|\zeta|}{1+{|\zeta|}^2}\langle D|\\  &+\frac{-ie^{-i\phi_2}d\phi_2\langle g_1|
    -e^{-i\phi_1}(-i|\zeta|d\phi_1+d|\zeta|)\langle g_2|}
    {\sqrt{1+{|\zeta|}^2}}.
  \end{split}
\end{equation}
Then we can calculate the vector potential by definition
\begin{equation}\label{eq:vecpotential}
  \begin{split}
    \mathbf{A}
      =&i\hbar \langle D|\nabla D\rangle \\
      =&i\hbar\left(-\frac{|\zeta|d|\zeta|}{1+{|\zeta|}^2}
      +\frac{id\phi_2+|\zeta|(i|\zeta|d\phi_1+ d|\zeta|)} {1+{|\zeta|}^2}\right)\\
      =&-\hbar\frac{d\phi_2+{|\zeta|}^2d\phi_1}{1+{|\zeta|}^2}.
  \end{split}
\end{equation}
This calculation is easy as $\langle D|D\rangle$ and $\langle g_i|g_j\rangle=\delta_{ij}$. For gauge field, we have two method to calculate it.  One is directly, by definition
\begin{equation}\label{eq:AbelianGaugeField}
  \begin{split}
    \mathbf{B}
    =&d\mathbf{A}+\frac{1}{i\hbar}\mathbf{A}\wedge\mathbf{A} \\
    =&-\hbar\left(
    \frac{{2|\zeta|}d|\zeta|\wedge d\phi_1} {1+{|\zeta|}^2}
    -\frac{2|\zeta|d|\zeta|\wedge (d\phi_2+{|\zeta|}^2 d\phi_1)}{ {(1+{|\zeta|}^2)}^2 }\right)\\
    =&-\hbar\left(
    \frac{2{|\zeta|}d|\zeta|\wedge d\phi_1-2|\zeta| d|\zeta|\wedge d\phi_2} {{(1+{|\zeta|}^2)}^2}\right)\\
    =&\hbar\left(
    \frac{( d\phi_1-d\phi_2)\wedge d({|\zeta|}^2)} {{(1+{|\zeta|}^2)}^2}\right)\\
    =&\hbar\frac{d\phi\wedge d({|\zeta|}^2)} {{(1+{|\zeta|}^2)}^2},
  \end{split}
\end{equation}
where we have used $d^2=0$ and $df\wedge dg=-dg\wedge df$, and the second term on the right hand side of first line vanishes for $\mathbf{A}$ is simply a 1-form. Translate back to usual language of vector analysis
\begin{equation}\label{eq:AbelianGaugeField2}
  \begin{split}
    \mathbf{B}
    =&\hbar\frac{\nabla \phi\times \nabla({|\zeta|}^2)} {{(1+{|\zeta|}^2)}^2},
  \end{split}
\end{equation}
which is consistent with Eq.~30 in Ref.~\citenum{DalibardArtificialGauge}. Comparing the calculations before, in this case the language of differential form do save us from the tedious vector product or its components.

We have another way to calculate the gauge field. by definition $\mathbf{B}=\nabla\times\mathbf{A}=i\hbar\nabla\times \langle D|\nabla D\rangle= i\hbar\langle \nabla D|\times|\nabla D\rangle$.  This vector identity is not very obvious, however its differential form correspondence $\langle d D|\wedge |d D \rangle$ is easy to prove (see Theorem~\ref{thm:diffs} in Appendix).  Careful reader must have noticed some strange operations $\langle d *|\wedge |d *\rangle$. In fact it is well defined in a sense that $\langle *||*\rangle$ indicates the Hermitian inner products of Dirac's bra-ket, and $d* \wedge d*$ is normal wedge product. Since the inner product is Hermitian linear integral (sum) over internal space, whereas d is differential operator acting on the $C^\infty$ function over base  manifold, they do commute. Then one can calculate it without any ambiguity. It is more complex to use this formula here to calculate $\mathbf{B}$.  For pedagogical purpose, we demonstrate the calculating rules in the following to convince the reader that it does work (That is why we also calculated $\langle \nabla D|$ in Eq.~\ref{eq:DDark}).
\begin{equation}\label{eq:dddd}
  \begin{split}
    \langle d D|\wedge| d D \rangle
    =&\frac{(-id\phi_2+{|\zeta|}( d|\zeta|-i|\zeta|d\phi_1)) \wedge (-|\zeta|d|\zeta|)}{ {(1+{|\zeta|}^2)}^2 }\\
    &+\frac{ (-|\zeta|d|\zeta|)\wedge(id\phi_2+{|\zeta|}( d|\zeta|+i|\zeta|d\phi_1)) }{ {(1+{|\zeta|}^2)}^2 }\\
    & +\frac{2i|\zeta|d|\zeta|\wedge d\phi_1}{{1+{|\zeta|}^2}}\\
    =& \frac{2i{|\zeta|}(d\phi_2-d\phi_1) \wedge d|\zeta|}{ {(1+{|\zeta|}^2)}^2 }\\
    =& \frac{-i d\phi \wedge d({|\zeta|}^2)}{ {(1+{|\zeta|}^2)}^2 }.
  \end{split}
\end{equation}
At last, as we expected $\mathbf{B}=i\hbar\langle d D|\wedge| d D \rangle$, we cam back to Eq.~\ref{eq:AbelianGaugeField}.

Finally, we sstill need to calculated the vector potential
\[
W=\frac{1}{2M}(\mathbf{A}^2_{\phantom{a}D,D}-{(\mathbf{A}_{D,D})}^2)=
\frac{\sum_{j \neq D}{|\mathbf{A}_{D,j}|}^2}{2M},
\]
where the first term is from the diagonal term of $\mathbf{A}^2$, and the second term is due to the square term of kinetic energy (cf. Eq.~\ref{eq:redKE}).  Since $\mathbf{A}$ is a matrix vector, it carries two kind of indices, indices of Euclidean space and that of internal space. So one should be careful here that we need to deal with two kind of inner products. For the Euclidean one, there is a mathematical correspondence in the language of differential form by using Hodge star operator``$\star$''. The inner product of two vector corresponds to $\omega_A\wedge\star\omega_B$ that is a 3-form (or volume form). For the one of internal space, the inner product is Hermitian.  It will be greatly simplified if make make good use of the orthonormal relations of the bases.
We notice that the summing indices $j$ runs over the subspace orthogonal to $|D\rangle$, if we choose $|B\rangle$ and $|e\rangle$ the basis, there is only one term contributing to the sum ($|e\rangle$ is  orthogonal to $|g_1\rangle$ and $|g_2\rangle$). Therefore
\begin{equation}\label{eq:scalarpot}
  \begin{split}
    W=&\frac{{|\mathbf{A}_{D,B}|}^2}{2M}\\
    =&\hbar^2\frac{{|\nabla (|\zeta|)|}^2+ {|\zeta|}^2{|\nabla \phi|}^2} {2M{(1+{|\zeta|}^2)}^2}.
  \end{split}
\end{equation}
Since the eigen energy of the dark state is zero, no term corresponding to $\hbar\Omega/2$ in Eq.~\ref{eq:redResult} appears here. We at last finish the mountains of calculations.

\begin{figure}
  \centering
  \includegraphics[scale=0.35]{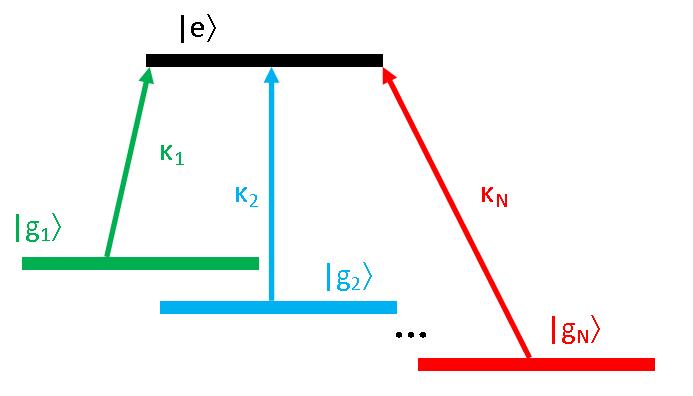}\\
  \caption{N-pod Scheme}\label{fig:npod}
\end{figure}

In experiments, one usually uses cold alkali atom system.  One chooses $|g_i\rangle$ as the hyperfine level of the ground state, and $|e\rangle$ as first excited state.  The fact that no population of dark state is on $|e\rangle$ protects our system from spontaneous emission.  Unfortunately, as we have shown, this lambda scheme still provides us Abelian gauge field.  So as to achieve a non-Abelian scheme, we may simply put more pods on the ground, which naturally leads us to an n-pod system (Fig.~\ref{fig:npod}).

Before we proceed to more general case, let us examine the next simplest case of $N=3$.  All the physical consideration mathematical techniques are exactly the same as $N=1$ of our toy model and $N=2$ of $\Lambda$-scheme, so it is as easy as counting form 1 and 2 to 3.

First, again write down the coupling Hamiltonian of atom and light in RWA (without detuning)~\footnote{For consistency of the symbols in this article our notations are different from Ref.~\citenum{Bergmann1999} that has discussed this scheme in detail. }
\begin{equation}\label{eq:Ham4level}
  H_I=\frac{\hbar}{2}
  \begin{pmatrix}
   0        &    \kappa_1^* &   0         &0\\
   \kappa_1 &    0          &   \kappa_2  &\kappa_3\\
   0        &    \kappa_2^* &   0         &0\\
   0        &    \kappa_3^* &   0         &0\\
  \end{pmatrix}.
\end{equation}
By Dirac notation
\begin{equation*}
  \begin{split}
    H_I=&\frac{\hbar}{2}(\kappa_1|e\rangle\langle g_1|
        +\kappa_2|e\rangle\langle g_2|+\kappa_3|e\rangle\langle g_2|)+h.c. \\
       =& \frac{\hbar}{2}(\kappa|e\rangle\langle B|)+h.c. ,
  \end{split}
\end{equation*}
where $|B\rangle=(\kappa_1^*|g_1\rangle+\kappa_2^*|g_2\rangle+\kappa_3^* |g_3\rangle)/\kappa$, and $\kappa=\sqrt{{|\kappa_1|}^2+{|\kappa_2|}^2+{|\kappa_3|}^2}$. This time the interaction still only couples $|e\rangle$ and $|B\rangle$ together, but now we have a Hilbert space of dimension 4!  So there are another two degenerate dark states of eigen energies $0$.
\begin{equation}\label{eq:Darkstatetripod}
  \begin{split}
    |D_1\rangle=&(\kappa_2|g_1\rangle-\kappa_1|g_2\rangle)/\eta,\\
    |D_2\rangle=&(\kappa_3\frac{\kappa_1^*}{\eta}|g_1\rangle
    +\kappa_3\frac{\kappa_2^*}{\eta}|g_2\rangle-\eta|g_3\rangle)
    /\kappa,
  \end{split}
\end{equation}
where $\eta={({|\kappa_1|}^2+{|\kappa_2|}^2)}^{{1}/{2}}$, who are orthogonal and uncoupled to $|e\rangle$ and $|B\rangle$.  On the other hand, $|e\rangle$ and $|B\rangle$ split into $|\pm\rangle=(|e\rangle \pm |B\rangle)/\sqrt{2}$ with eigenenergies $E_\pm=\pm\hbar\kappa/2$. It is more convenient to define several angle parameters
\begin{equation}\label{eq:Darkangle}
  \begin{split}
   \sin \vartheta=&|\kappa_1|/\eta,
   \cos \vartheta=|\kappa_2|/\eta,\\
   \sin \varphi=&|\kappa_3|/\kappa,
   \cos \varphi=\eta/\kappa.
  \end{split}
\end{equation}
Denote $\kappa_j=|\kappa_j|e^{iS_j}$\footnote{To avoid ambiguity of the symbols, $\varphi$ and $\vartheta$ are angle parameters, and $S_j$ are used for phase instead of $\phi$ in $\Lambda$-scheme.}.
Then, under the basis of $<|g_1\rangle,|e\rangle,|g_2\rangle,|g_3\rangle> $, four eigenstates of the coupling Hamiltonian become
\begin{widetext}
\begin{equation}\label{eq:Darkstatematrix}
  \begin{split}
  |D_1\rangle=&
    \begin{pmatrix}
    \cos \vartheta e^{iS_2}\\ 0\\ -\sin \vartheta e^{iS_1}\\ 0
    \end{pmatrix},
  |D_2\rangle=
    \begin{pmatrix}
    \sin \varphi {\sin \vartheta}e^{i(S_3-S_1)}\\ 0\\ \sin \varphi  {\cos \vartheta}e^{i(S_3-S_2)}\\ -\cos \varphi
    \end{pmatrix},
  |+\rangle=\frac{1}{\sqrt{2}}
    \begin{pmatrix}
    \cos \varphi {\sin \vartheta}e^{-iS_1}\\ 1\\ \cos \varphi  {\cos \vartheta}e^{-iS_2}\\ \sin \varphi e^{-iS_3}
    \end{pmatrix},
  |-\rangle=\frac{-1}{\sqrt{2}}
    \begin{pmatrix}
    \cos \varphi {\sin \vartheta}e^{-iS_1}\\ -1\\ \cos \varphi  {\cos \vartheta}e^{-iS_2}\\ \sin \varphi e^{-iS_3}
    \end{pmatrix}.\\
    &\phantom{=}
  \end{split}
\end{equation}
\end{widetext}

With this new group of basis, redo
the procedures before.  We have the projection as
\begin{equation}\label{eq:NDproj}
  |\Psi (\mathbf{r},t)\rangle = \sum_{j=D_k,\pm}\psi_j (\mathbf{r},t)|\chi_j\rangle,
\end{equation}
where $k=1,2$ for tripod scheme (for $N$-pod scheme, with total $N+1$ levels, $k=1,2,3...N-1$ counting the dart states).  Write down the time dependent Schr\"odinger equation in terms of the basis $<|D_k\rangle,k=1,2...N-1,|\pm\rangle>$:
\begin{widetext}
\begin{equation}\label{eq:generaltdSchroedinger}
  \begin{split}
     &i\hbar \frac{\partial}{\partial t}
       \begin{pmatrix}\psi_D(\mathbf{r},t)\\ \psi_B(\mathbf{r},t) \end{pmatrix}
     =\frac{{(\mathbf{p}\hat{\mathbb{I}}
         -\hat{\mathbf{A}})}^2}{2M}
         \begin{pmatrix}\psi_D(\mathbf{r},t)\\ \psi_B(\mathbf{r},t) \end{pmatrix}
      + \begin{pmatrix}V_D(\mathbf{r})+U_D& 0\\0 & V_B(\mathbf{r})+U_B \end{pmatrix}
      \begin{pmatrix}\psi_D(\mathbf{r},t)\\ \psi_B(\mathbf{r},t) \end{pmatrix}\\
      =&\left[\frac{1}{2M}
      \begin{pmatrix}
         {(\mathbf{p}\hat{\mathbb{I}}
         -\hat{\mathbf{A}}_{D,D})}^2& 0\\
         0 &{(\mathbf{p}\hat{\mathbb{I}}
         -\hat{\mathbf{A}}_{B,B})}^2
      \end{pmatrix}
      + \frac{1}{2M}\begin{pmatrix}
      \hat{\mathbf{A}}_{D,B}\cdot\hat{\mathbf{A}}_{B,D}& \sim0\\
      \sim0 & \hat{\mathbf{A}}_{B,D}\cdot\hat{\mathbf{A}}_{D,B}
       \end{pmatrix}
      + \begin{pmatrix}\tilde{V}_D(\mathbf{r})& 0\\0 & \tilde{V}_B(\mathbf{r}) \end{pmatrix}
      \right]
      \begin{pmatrix}\psi_D(\mathbf{r},t)\\ \psi_B(\mathbf{r},t) \end{pmatrix}\\
      =&\left[
      \begin{pmatrix}
         \frac{1}{2M}{(\mathbf{p}\hat{\mathbb{I}}
         -\hat{\mathbf{A}}_{D,D})}^2+\tilde{V}_D(\mathbf{r})+{W}_D(\mathbf{r})& 0\\
         0 &\frac{1}{2M}{(\mathbf{p}\hat{\mathbb{I}}
         -\hat{\mathbf{A}}_{B,B})}^2+\tilde{V}_B(\mathbf{r})+{W}_B(\mathbf{r})
      \end{pmatrix}
      \right]
      \begin{pmatrix}\psi_D(\mathbf{r},t)\\ \psi_B(\mathbf{r},t) \end{pmatrix},
  \end{split}
\end{equation}
\end{widetext}
where $\hat{\mathbf{A}}_{J,K}=i\hbar\langle J|\nabla |K \rangle, J,K=D,B$ and suffices $D=$dark states submanifold and $B=$bright states submanifold.  Since we used basis of eigenstates of atom-light coupling, $U$ is diagonal.  Since the dark states and bright states are separated by a finite gap, by adiabatic condition, the off-diagonal entries of  $\hat{\mathbf{A}}\cdot \mathbf{p}$ and $\hat{\mathbf{A}}^2$ are higher order terms which are neglected.  Therefore the motions of bright and dark submanifolds are decoupled. The diagonal terms of  $\hat{\mathbf{A}}^2$ ( $\hat{\mathbf{A}}^2_{\phantom{*}J,J}$ are absorbed into the the square terms) contributes a scalar potential $W$ that is quite different from $V$.  $W$ is not diagonal in general; it is as $\mathbf{A}$ a non-Abelian potential coupling the components of wave function in the submanifold.  Since the excited state is orthogonal to dark submanifold, dark states do not suffer form spontaneous emission, which provides an practical model for artificial non-Abelian gauge fields.  By projecting to this submanifold, we arrive at last to our effective EoM with non-Abelian gauge fields

\begin{equation}\label{eq:generaleffEoM}
  \begin{split}
     i\hbar \frac{\partial}{\partial t}
       \psi_D(\mathbf{r},t)
     =\left[\frac{1}{2M}{(\mathbf{p}\hat{\mathbb{I}}
         -\hat{\mathbf{A}}_{D})}^2+\tilde{V}(\mathbf{r})+{W}(\mathbf{r})
      \right]\psi_D(\mathbf{r},t),
  \end{split}
\end{equation}
where non-Abelian scalar potential
\begin{equation}\label{eq:NonAbelScalarPot}
  \begin{split}
W=
\frac{1}{2M}\sum_{B (\neq D)}{\mathbf{A}_{D,B}\cdot\mathbf{A}_{B,D}}.
  \end{split}
\end{equation}
\begin{widetext}
The explicit results are calculated\footnote{These results are calculated by Ruseckas {\sl et. al} (Ref.~\citenum{Ruseckas2005}) as well as Unanyan, Shore and Bergmann (Ref.~\citenum{Bergmann1999}).  The convention of the base vectors has a little difference here.  We chose a different angle parameter and phase factor.  The result in Ref.~\citenum{Ruseckas2005} can be obtained by a non-Abelian gauge transform $\mathbf{A}'=U^\dagger\mathbf{A}U+i\hbar U^\dagger dU$. This result can also be found in review article Ref.~\citenum{DalibardArtificialGauge}.}, vector potential
\begin{equation}\label{eq:3podNonAbelvecPot}
  \begin{split}
  \mathbf{A}_{11}=&-\hbar(\cos^2 \vartheta dS_2+\sin^2 \vartheta dS_1),\\
  \mathbf{A}_{12}=&\hbar\sin \varphi e^{i(S_3-S_1-S_2)}\Bigg[\sin \vartheta \cos\vartheta (dS_1-dS_2)+id\vartheta \Bigg],\\
  \mathbf{A}_{22}=&\hbar\sin^2 \varphi\Bigg[\sin^2 \vartheta (dS_1-dS_3)+\cos^2 \vartheta (dS_2-dS_3)\Bigg],\\
  \mathbf{A}_{1\pm}=&\pm\frac{\hbar}{\sqrt{2}}\cos \varphi e^{-i(S_1+S_2)}\Bigg[\sin \vartheta \cos\vartheta (dS_1-dS_2)+id\vartheta \Bigg],\\
  \mathbf{A}_{2\pm}=&\pm\frac{\hbar}{\sqrt{2}} e^{-iS_3}\Bigg[\sin \varphi \cos\varphi (\sin^2\vartheta dS_1+ \cos^2\vartheta dS_2  - dS_3)-id\varphi \Bigg],
  \end{split}
\end{equation}
and scalar potential
\begin{equation}\label{eq:3podNonAbelscalPot}
  \begin{split}
  W_{11}=&\frac{{|\mathbf{A}_{1+}|}^2+{|\mathbf{A}_{1-}|}^2}{2M}
  =\frac{\hbar^2}{2M}\cos^2 \varphi\Bigg[\sin^2 \vartheta \cos^2\vartheta {|\nabla(S_1-S_2)|}^2+{|\nabla\vartheta|}^2 \Bigg],\\
  W_{12}=&\frac{{\mathbf{A}_{1+}}\cdot{\mathbf{A}_{+2}}+{\mathbf{A}_{1-}} \cdot{\mathbf{A}_{-2}}}{2M}\\
  =&\frac{\hbar^2}{2M}\cos \varphi e^{i(S_3-S_1-S_2)}\Bigg[\frac{\sin 2\vartheta}{2} \nabla(S_1-S_2)+i\nabla\vartheta \Bigg]
  \cdot \Bigg[\frac{\sin 2\varphi}{2} (\sin^2\vartheta \nabla S_1+ \cos^2\vartheta \nabla S_2 - \nabla S_3)+i\nabla \varphi \Bigg]\\
  W_{22}=&\frac{{|\mathbf{A}_{2+}|}^2+{|\mathbf{A}_{2-}|}^2}{2M}
  =\frac{\hbar}{2M}\Bigg[\sin^2 \varphi \cos^2 \varphi {|\sin^2\vartheta \nabla S_1+ \cos^2\vartheta \nabla S_2} {- \nabla S_3|}^2 +{|\nabla\varphi|}^2 \Bigg].
  \end{split}
\end{equation}
For convenience (it is also practical in experiment as long as the laser coupling $|1\rangle$ and $|2\rangle$ are copropagating and have the same frequency as well as orbital angular momentum), we may set $S_1=S_2=S$.   By adjusting the relative phase of the third laser we may $S_3=0$.  This leads to
\begin{equation}\label{eq:3podNonAbelvecPot2}
  \begin{split}
  {{\mathbf{A}}}=&\hbar
\begin{pmatrix}
  -dS &
  i \sin \varphi e^{-2iS} d\vartheta\\
  -i \sin \varphi e^{2iS} d\vartheta
  & \sin^2 \varphi dS
\end{pmatrix}.\\
  \end{split}
\end{equation}
We further to calculate the nun-Abelian gauge field.  First calculate the exterior differential of $\mathbf{A}$
\begin{equation}\label{eq:diffNonAbelvecPot}
  \begin{split}
  d{{\mathbf{A}}}=&\hbar
\begin{pmatrix}
  0 &
  i \cos \varphi e^{-2iS} d\varphi\wedge d\vartheta
  +i \sin \varphi e^{-2iS} (-2i)dS\wedge d\vartheta\\
  -i \cos \varphi e^{2iS} d\varphi\wedge d\vartheta
  -i \sin \varphi e^{2iS}   (2i)dS\wedge d\vartheta
  & 2\sin \varphi \cos \varphi d\varphi \wedge dS
\end{pmatrix}\\
  =&\hbar
\begin{pmatrix}
  0 &
  e^{-2iS} (i \cos \varphi d\varphi\wedge d\vartheta
   +\sin \varphi 2dS\wedge d\vartheta)\\
   e^{2iS} (-i \cos \varphi d\varphi\wedge d\vartheta
   +\sin \varphi 2dS\wedge d\vartheta)
  & 2\sin \varphi \cos \varphi d\varphi \wedge dS
\end{pmatrix},\\
  \end{split}
\end{equation}
Then we calculate the exterior product of $\mathbf{A}$
\begin{equation}\label{eq:extprodA}
  \begin{split}
  {{\mathbf{A}}\wedge{\mathbf{A}}}=&\hbar^2
\begin{pmatrix}
  -dS &
  i \sin \varphi e^{-2iS} d\vartheta\\
  -i \sin \varphi e^{2iS} d\vartheta
  & \sin^2 \varphi dS
\end{pmatrix}\wedge
\begin{pmatrix}
  -dS &
  i \sin \varphi e^{-2iS} d\vartheta\\
  -i \sin \varphi e^{2iS} d\vartheta
  & \sin^2 \varphi dS
\end{pmatrix}\\
  =&\hbar^2
  \begin{pmatrix}
  0 &
  - dS\wedge i \sin \varphi e^{-2iS} d\vartheta+ i \sin \varphi e^{-2iS} d\vartheta\wedge\sin^2 \varphi dS\\
  i \sin \varphi e^{2iS} d\vartheta \wedge dS -\sin^2 \varphi dS \wedge i \sin \varphi e^{2iS} d\vartheta
  & 0
  \end{pmatrix}\\
  =&\hbar^2
  \begin{pmatrix}
  0 &
  -i\sin \varphi e^{-2iS}(1+\sin^2\varphi ) dS \wedge d\vartheta\\
  -i \sin \varphi e^{2iS}(1+\sin^2 \varphi) dS \wedge d\vartheta
  & 0
  \end{pmatrix},
  \end{split}
\end{equation}
Combining Eq.~\ref{eq:diffNonAbelvecPot} and ~\ref{eq:extprodA}, we finally obtain the non-Abelian gauge field\footnote{This result was also given in Ref.~\citenum{Ruseckas2005}. The entry of $\mathbf{B}_{12}$ in Eq.(17) in Ref.~\citenum{Ruseckas2005} was wrong.  Our convention here is different from the one in Ref.~\citenum{Ruseckas2005}.  For comparison, one may substitute $(\pi/2-\phi)$ for $\vartheta$ and   $(\pi/2-\theta)$ for $\varphi$ , and do a unitary transform of $U=[e^{-2iS}, 0;0, 1]$, which leads to the correct result to Eq.(17) in Ref.~\citenum{Ruseckas2005}.}
\begin{equation}\label{eq:NonAbelianField}
  \begin{split}
\mathbf{B}
    =&d\mathbf{A}+\frac{1}{i\hbar}\mathbf{A}\wedge\mathbf{A}\\
  =&\hbar
\begin{pmatrix}
  0 &
  e^{-2iS} (i \cos \varphi d\varphi\wedge d\vartheta
   +\sin \varphi 2dS\wedge d\vartheta)\\
   e^{2iS} (-i \cos \varphi d\varphi\wedge d\vartheta
   +\sin \varphi 2dS\wedge d\vartheta)
  & 2\sin \varphi \cos \varphi d\varphi \wedge dS
\end{pmatrix}\\
&+\hbar
  \begin{pmatrix}
  0 &
  -\sin \varphi e^{-2iS}(1+\sin^2\varphi ) dS \wedge d\vartheta\\
  - \sin \varphi e^{2iS}(1+\sin^2 \varphi) dS \wedge d\vartheta
  & 0
  \end{pmatrix}\\
  =&\hbar
\begin{pmatrix}
  0 &
  e^{-2iS} (i \cos \varphi d\varphi\wedge d\vartheta
   +\sin \varphi (1-\sin^2 \varphi)dS\wedge d\vartheta)\\
   e^{2iS} (-i \cos \varphi d\varphi\wedge d\vartheta
   +\sin \varphi (1-\sin^2 \varphi)dS\wedge d\vartheta)
  & 2\sin \varphi \cos \varphi d\varphi \wedge dS
\end{pmatrix}.
  \end{split}
\end{equation}
\end{widetext}

\subsection{\label{sec:RotGas}Rotating Gases}
In the previous sections, we have constructed artificial gauge fields through Berry phase induced by atom-light coupling.  If we consider the simplest case of $U(1)$ electromagnetic field,  it acts on the particle by a Lorentz force.  A classical analogy to the Lorentz force in Coriolis force in rotating frame.  A typical example of geometrical phase in classical mechanics is the Foucault's pendulum.  This analogy encourages us to generate the artificial magnetic in a rotating gas system, which is, in fact, realized earlier historically than the optical methods~\cite{Cornell1999}~\cite{Madison}~\cite{Shaeer2001}.

Consider a 2-dimensional gas disc (the atoms are confined in x-y plane ). Its Hamiltonian is
\begin{equation}\label{eq:HamAtomRest}
  H=\frac{\hat{\mathbf{p}}^2}{2m}+V(\hat{\mathbf{r}}),
\end{equation}
where $\hat{\mathbf{p}}$ and $\hat{\mathbf{r}}$ are operators in the rest frame.
Now let us rotate our system around z-axis with a constant angular velocity $\mathbf{\Omega}_{rot}=\Omega_{rot} \mathbf{e}_z$.  In quantum mechanics, rotation generator of coordinate is the angular momentum operator $\hat{\mathbf{L}}$.  A general rotation around $\mathbf{n}$ by angle $\theta$ can be written in terms of generator by exponential map  $\hat{\mathcal{R}}_\mathbf{n}(\theta)=e^{-\frac{i}{\hbar}\theta\mathbf{n}\cdot\hat{\mathbf{L}}}$, where $\mathbf{n}$ is a unit vector indicating the rotation axis.  So we may rotate our system by operator
\begin{equation}\label{eq:RotOper}
  \hat{\mathcal{R}}_z(\Omega_{rot} t) =e^{-\frac{i}{\hbar}\Omega_{rot} t\hat{L}_z},
\end{equation}
where $\hat{L}_z$ is the z-component of angular momentum.  Correspondingly,  our Hamiltonian should transform as
\begin{equation}\label{eq:HamAtomRestTD}
  \hat{H}(t)=\hat{\mathcal{R}}_z(\Omega_{rot} t)H\hat{\mathcal{R}}^\dagger_z(\Omega_{rot} t),
\end{equation}
where $\hat{\mathbf{r}}$ transform to $\hat{\mathbf{r}}'=\hat{\mathcal{R}}_z(\Omega_{rot} t)\hat{\mathbf{r}}\hat{\mathcal{R}}^\dagger_z(\Omega_{rot} t)$, and $\hat{\mathbf{p}}^2$ is invariant under rotation $\hat{\mathbf{p}}'^2=\hat{\mathcal{R}}_z(\Omega_{rot} t)\hat{\mathbf{p}}^2\hat{\mathcal{R}}^\dagger_z(\Omega_{rot} t)=\hat{\mathbf{p}}^2$ \footnote{One can verify that $[\hat{\mathbf{p}}^2,L_z]=0$ by brute force calculation.  More intuitively, since $\hat{\mathbf{p}}^2$ is an Euclidean scalar, it is of course in variant under a rotation.}.
Then we have our time-dependent Schr\"odinger equation (TDSE)
\begin{equation}\label{eq:TDSE}
  i\hbar\frac{\partial}{\partial t} |\psi\rangle =\hat{H}(t)|\psi\rangle,
\end{equation}
where $|\psi\rangle$ is the state vector in rotating frame.  Since we stay in the lab frame, we need to transform the state vector back to lab frame $|\psi'\rangle=\hat{\mathcal{R}}^{-1}_z(\Omega_{rot} t)|\psi\rangle=\hat{\mathcal{R}}^\dagger_z(\Omega_{rot} t)|\psi\rangle$, which leads to the TDSE in lab frame
\begin{equation}\label{eq:TDSELab}
\begin{split}
  i\hbar\frac{\partial}{\partial t} |\psi'\rangle=&i\hbar\frac{\partial}{\partial t} \left( \hat{\mathcal{R}}^\dagger_z(\Omega_{rot} t)|\psi\rangle\right)\\
  =&i\hbar\frac{\partial}{\partial t} \left( \hat{\mathcal{R}}^\dagger_z(\Omega_{rot} t)\right)|\psi\rangle
  +\hat{\mathcal{R}}^\dagger_z(\Omega_{rot} t)\left[i\hbar\frac{\partial}{\partial t}  |\psi\rangle \right]\\
  =&i\hbar\frac{\partial}{\partial t} \left( \hat{\mathcal{R}}^\dagger_z(\Omega_{rot} t)\right) \hat{\mathcal{R}}_z(\Omega_{rot} t)
  \hat{\mathcal{R}}^\dagger_z(\Omega_{rot} t)|\psi\rangle\\
  &+\hat{\mathcal{R}}^\dagger_z(\Omega_{rot} t)\hat{H}(t)   \hat{\mathcal{R}}_z(\Omega_{rot} t)
  \hat{\mathcal{R}}^\dagger_z(\Omega_{rot} t)|\psi\rangle \\
  =&- \Omega_{rot} \hat{L}_z |\psi'\rangle +{H}|\psi'\rangle =H'|\psi'\rangle,
\end{split}
\end{equation}
where
\begin{equation}\label{eq:effectiveHam}
  H'=H-\mathbf{\Omega}_{rot}\cdot\hat{\mathbf{L}},
\end{equation}
is the effective Hamiltonian in lab frame that is time independent.
Noting that $ \mathbf{\Omega}_{rot}\cdot\hat{\mathbf{L}}= \mathbf{\Omega}_{rot}\cdot (\hat{\mathbf{r}}\times\hat{\mathbf{p}}) = (\mathbf{\Omega}_{rot}\times \hat{\mathbf{r}})\cdot\hat{\mathbf{p}}$ and $[(\mathbf{\Omega}_{rot}\times \hat{\mathbf{r}})_a,\hat{\mathbf{p}}_a]=0$, we can rewrite the effective Hamiltonian as
\begin{equation}\label{eq:effectiveHam2}
\begin{split}
  H'=&\frac{\hat{\mathbf{p}}^2}{2m}+V(\hat{\mathbf{r}})-\mathbf{\Omega}_{rot}\cdot\hat{\mathbf{L}}\\
  =&\frac{{(\hat{\mathbf{p}}-\hat{\mathbf{A}})}^2}{2m}+V(\hat{\mathbf{r}}) +W_{rot}(\hat{\mathbf{r}}),
\end{split}
\end{equation}
where
\begin{equation}\label{eq:vectPot}
  \hat{\mathbf{A}}=m\mathbf{\Omega}_{rot}\times \hat{\mathbf{r}},
\end{equation}
is the effective vector potential, and
\begin{equation}\label{eq:scalarPot}
  W_{rot}(\hat{\mathbf{r}})=-\frac{1}{2}m{(\mathbf{\Omega}_{rot}\times \hat{\mathbf{r}})}^2,
\end{equation}
is the centrifugal potential.  In the specific configuration that we consider here, $\hat{\mathbf{r}}$ is in the x-y plane and $\mathbf{\Omega}_{rot}$ is along z-axis.  Thus this term reduces to
\begin{equation}\label{eq:scalarPot2}
  W_{rot}(\hat{\mathbf{r}})=-\frac{1}{2}m{\Omega}_{rot}^2\hat{\mathbf{r}}^2.
\end{equation}
The induced effective magnetic field is
\begin{equation}\label{eq:magfield}
  \hat{\mathbf{B}}=\nabla\times\hat{\mathbf{A}}=2m{\Omega}_{rot}\mathbf{e}_z.
\end{equation}

If our system in x-y plane is confined by a harmonic potential
\begin{equation}\label{eq:harmPot}
  V(\hat{\mathbf{r}})=\frac{1}{2}m(\omega_x^2\hat{{x}}^2+\omega_y^2\hat{{y}}^2),
\end{equation}
where $\omega_x, \omega_y\sim \omega$ characterize the confinement strength.  When the angular velocity of rotation $\Omega_{rot}$ approaches $\omega$, the centrifugal anti-trapping potential compensates the confinement so that the atoms move ``freely'' only seeing the magnetic field.  The associated cyclotron frequency is $\omega_c=B/m=2\Omega_{rot}$.

The Hamiltonian shown above is the single-particle Hamiltonian.  We can further take the two-body interaction in to account without changing much in the mathematical frame.  All that one need to observe is that $\mathbf{r}_1\cdot\mathbf{r}_2$ is invariant under rotation as it is Euclidean scalar, which leads the invariance of ${|\mathbf{r}_1-\mathbf{r}_2|}^2=\mathbf{r}_1^2+\mathbf{r}_2^2-2\mathbf{r}_1\cdot\mathbf{r}_2$.  If two-body interaction $V_12$ depends only on $|\mathbf{r}_1-\mathbf{r}_2|$, all the mathematics above are still valid, therefore this formalism can be generalized to interacting particles.

The early experiments are demonstrated by the groups at JILA~\cite{Cornell1999}, ENS~\cite{Madison} and MIT~\cite{Shaeer2001}.
Creating vortex in BEC was originally reported in Ref.~\citenum{Cornell1999} by rotating through atom-light coupling of the internal degree of freedom.  In Ref.~\citenum{Madison}, they rotate the BEC by stirring  with an an optical spoon.  Several vortices were observed, which are imaged by Time of Flight (ToF). Ref.~\citenum{Shaeer2001} used dipole force exerted by two blue-detuned laser beams.  The ordered vortex lattices were  observed by resonant absorption imaging.  Local structure, defects, long range order and finite size effects were studied.
Further experiments to achieve rapid rotating BEC adopted the deformed trap method cooperating with evaporative~\cite{Cornell2003} and optical spin-up techniques~\cite{Cornell2004}.  Condensation (of Boson) near the Lowest Landau Level (LLL) was achieved.  In the regime of low filling factor $\nu\lesssim 10$, the BEC would exhibit strongly correlated characters, where melting of vortex lattice due to quantum fluctuation, emergence of fractional statistical quasiparticle excitations and other interesting features may appear.

\section{Spin-Orbit Coupling}
\subsection{Generating Spin-Orbit Coupling in Cold Atom System}
In non-relativistic quantum mechanics, the spin-orbit coupling induced from the non-relativistic limit of Dirac equation.  It emerges as a magnetic field seen by the moving electron (Lorentz transform of electromagnetic field) coupling the spin in dipole form (to the leading order of non-relativistic limit).  The magnetic field depends on the motion of the electron.  If we study the motion in terms of Hamiltonian, the electromagnetic coupling enters through vector potential $\mathbf{A}$ and scalar potential $V$.  If  $\mathbf{A}$ is constant, $\mathbf{B}=\nabla\times\mathbf{A}=0$, therefore no spin-orbital coupling.  This feature is no longer valid in non-Abelian case.  Even if $\mathbf{A}$ is constant, $\mathbf{B}=\nabla\times\mathbf{A}+\frac{1}{i\hbar}\mathbf{A}\times\mathbf{A}=\frac{1}{i\hbar}\mathbf{A}\times\mathbf{A}$ does not vanish in general.  Secondly, the off diagonal terms in $\mathbf{A}$ coupling spin (or pseudo-spin) directly through kinetic term of Hamiltonian
\begin{equation}\label{eq:NonAbelianHamKin}
  H_{kin}=\frac{{(\mathbf{p}-\mathbf{A})}^2}{2m}.
\end{equation}
Therefore non-Abelian gauge fields inevitably leads to spin-orbit coupling.  For example, the non-Abelian gauge field may invert isospin of the nucleon scattering a proton to a neutron~\cite{WuYang}~\cite{Horvathy}.  The non-Abelian gauge field also leads to new characters in A-B effects~\cite{Horvathy}~\cite{Jacob}.

In condensed matter physic, Spin-Orbit Coupling (SOC) plays important roles in spin Hall effect~\cite{Hirsch}, anomalous Hall effect~\cite{AnomalousHall}, quantum spin Hall effect~\cite{QSpinHall}~\cite{Z2SPH} or topological insulator~\cite{TIin3D}~\cite{HasaneKane}~\cite{QiZhang}, and quantum anomalous Hall effect~\cite{QAHE}.

It is instructive to construct the non-Abelian gauge fields and to induce SOC in cold atom systems.  We may control the system by finely tuning the laser and simulate those complex system.  Effective field in cold atom systems provide a another way to study some high energy physics and particle physics, e.g. searching for Majorana modes, on the table without large accelerator.  On the other hand cold atom also offers a setting to simulate the interacting may-body systems.  In real materials it is electron, a Fermion who plays the central role, although although quasiparticle excitations of Boson may emerges.  Cold atoms can be Fermion as well as Boson, which has its unique advantages.

\begin{figure}
  \centering
  \includegraphics[scale=0.4]{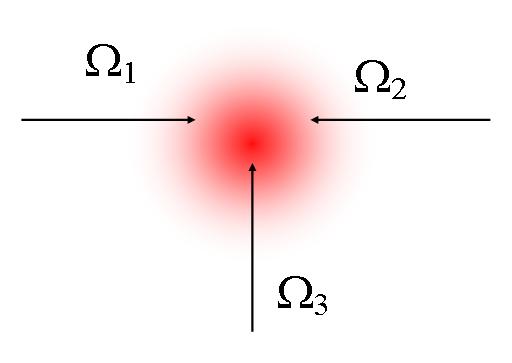}\\
  \caption{Laser Configuration for Generating Spin-Orbit Coupling in Tripod Scheme}\label{fig:SOCTripod.jpg}
\end{figure}

Several schemes were proposed to generating SOC in cold atom systems.  One of the intuitive way follows from our tripod model.  Since $\mathbf{A}$ is a Hermitian $2\times2$ matrix, what would happen if it is proportional to the spin operator?
A scheme  proposed in Ref.~\citenum{Ohberg2008a} and ~\citenum{Ohberg2008b} uses degenerate dark states in tripod model whose laser field configuration is shown in Fig.~\ref{fig:SOCTripod.jpg}.  Properly choosing laser such that $\mathbf{A}=-\hbar\kappa\bm{\sigma}_\perp$ , the effective Hamiltonian reduce to
\begin{equation}\label{eq:effHamSOC}
  H=\frac{{(-i\hbar\nabla+\hbar\kappa\bm{\sigma}_\perp)}^2}{2m}+V,
\end{equation}
where $\bm{\sigma}_\perp=\sigma_x\mathbf{e}_x+\sigma_y\mathbf{e}_y$ and $V$ is the total potential including trapping and effective potential that can be tuned diagonal.  Suppose $V$ is constant, therefore can be dropped out temporarily.  Hamiltonian acting on the plane wave solution $\bm{\psi}_{\mathbf{k}}=\psi e^{i\mathbf{k}\cdot\mathbf{r}}$, we obtain
\begin{equation}\label{eq:effHamSOC2}
  H_{\mathbf{k}}=\frac{\hbar^2}{2m} {(\mathbf{k}+\kappa\bm{\sigma}_\perp)}^2.
\end{equation}
It has two branches corresponding to two eigenspinors. The dispersion relation is shown in Fig.~\ref{fig:dispSOC} in terms of natural energy scale $v_0=\hbar\kappa/m$.
\begin{figure}
    \centering
    \includegraphics[scale=0.06]{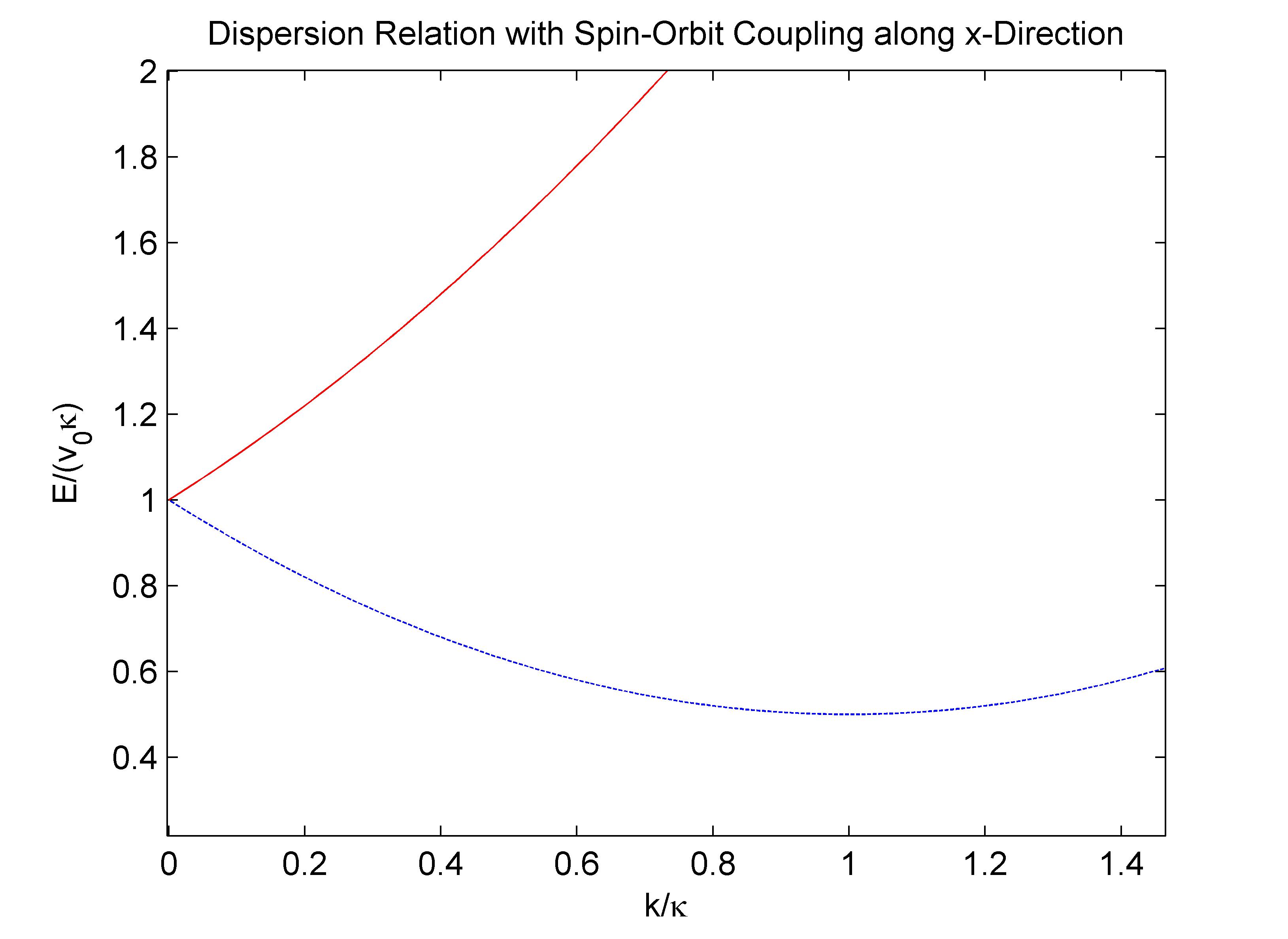}
    \includegraphics[scale=0.06]{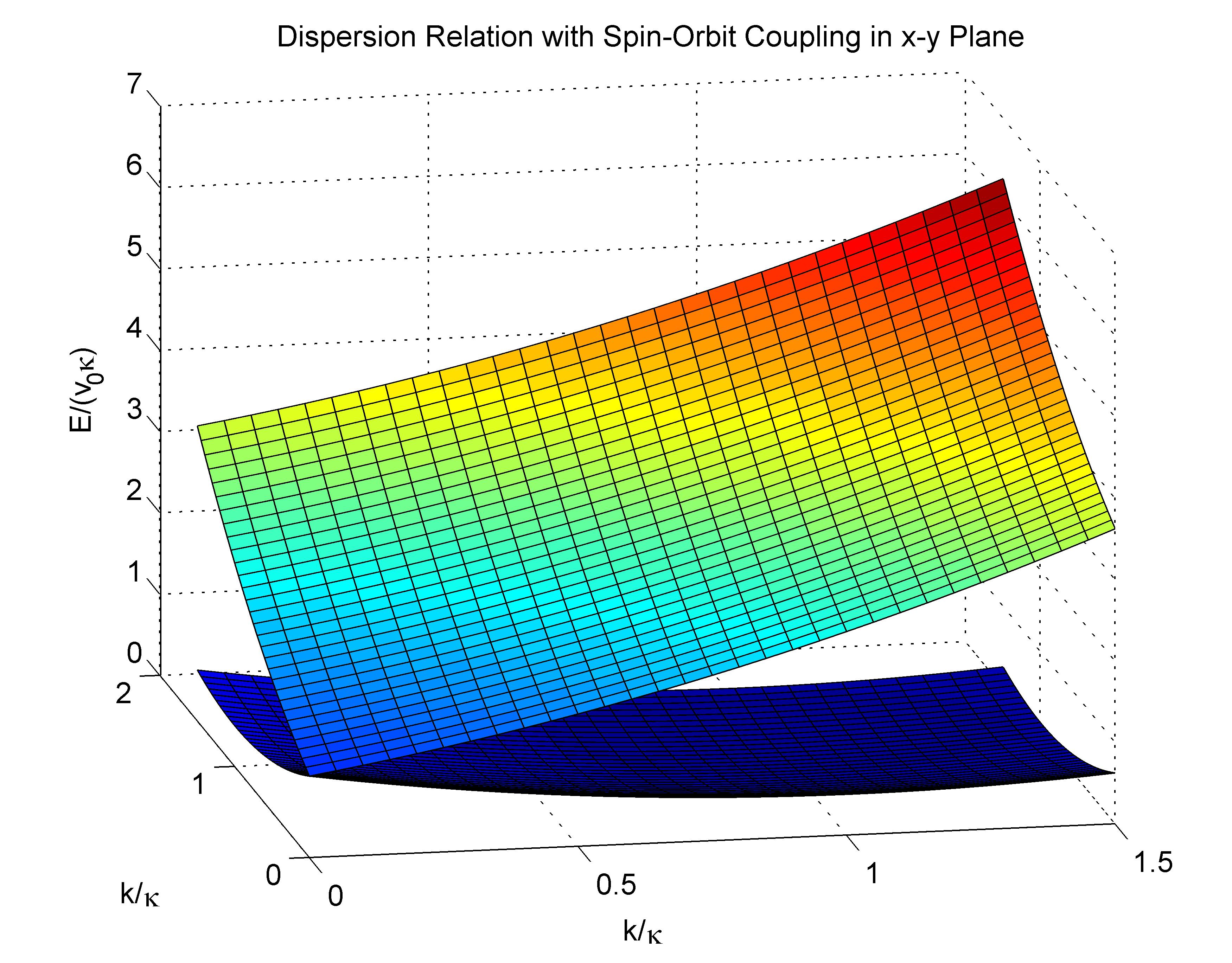}\\
  \caption{Dispersion Relation with Spin-Orbit Coupling}\label{fig:dispSOC}
\end{figure}
Two branches touch at the origin.  Expand $H_\mathbf{k}$ near the origin and neglect the higher order term of $\mathbf{k}\rightarrow 0$, the Hamiltonian becomes
\begin{equation}\label{eq:linHam}
  H_\mathbf{k}=\hbar v_0\mathbf{k}\cdot\bm{\sigma}_\perp+\frac{1}{2}mv_0^2,
\end{equation}
which is a 2-dimensional Dirac Hamiltonian.  This low energy effective Hamiltonian masters a massless particle obeying the Weyl equation.  $v_0$ plays the role of speed of light, whereas, it is recoil velocity of the typical order $1cm/s$ associated with wave vector $\kappa$.  The Hamiltonian $H_\mathbf{k}$ commutes with 2D chirality operator $\sigma=\mathbf{k}\cdot\bm{\sigma}_\perp/k$.  It was predicted to be able to observe negative reflection and Veselago-type lenses~\cite{Ohberg2008a}~\cite{Ohberg2008b}.  A similar effective Hamiltonian emerges from Dirac cones in graphene~\cite{Wallace1947}~\cite{McClure}.  Non-Abelian AB effect was also proposed in a same configuration~\cite{Jacob}.

Another scheme to generate generalized Rashba-Dresselhaus SOC with regular polygon configuration of lasers was proposed by Juzeli\ifmmode \bar{u}\else \={u}\fi{}nas, Ruseckas and Dalibard~\cite{Dalibard2010DRSOC}.

As we have shown in section~\ref{sec:NonAbelian}, although dark states has no population in the excited state, it is not the lowest dressed level.  Therefore collisions can still scatter the atoms out of dark submanifold to the ground dressed state.  To over come this difficulty, another scheme cyclically coupling N ground or metastable states was proposed by Campbell, Juzeli\ifmmode \bar{u}\else \={u}\fi{}nas and Spielman~\cite{Campbell}.

\section{Optical Flux Lattice}
\subsection{Optical Lattice}
The confinement of cold atoms can be achieved by utilizing the dipole force~\cite{Grimm200095}
\begin{equation}\label{eq:dipoleforce}
  \mathbf{F}_{dip}=-\nabla U_{dip}=\frac{1}{2}\Rea (\alpha) \nabla I
\end{equation}
due to a spatially varying ac Stark shift experienced by the cold atoms in a light field.  Two coherent off-resonance counter propagating lasers interfere with each other and form a standing wave, which leads to spatially modulated intensity with period of $\lambda/2$, where $\lambda$ is the wave length of the laser.  Depending on the sign of detuning $\Delta$, the atoms accumulate at the nodes (blue detuning) or the antinodes (red detuning)~\cite{CCTAdvAtomPhys}.    The quantized motion of the atoms comprise vibrational motion within an individual well and the tunneling between the neighbouring wells, leading to a band spectrum, whose dynamics can be described by a Boson-Hubbard Model~\cite{Jaksch813108}.   Apart from simulating the condensed matter physics~\cite{Jaksch813108}~\cite{FisherSF_INS30},  the cold atoms in optical lattice can also be used for quantum computation~\cite{Jaksch32Entanglement}.  Transition to the Mott insulator state was supposed to be an efficient way of preparing a quantum register with a fixed number of atoms per lattice site~\cite{LewensteinRev}.  This superfluid-Mott insulator transition was observed experimentally~\cite{BlochMOTTTransNature33}.

Even though the atoms in an optical lattice has close analog to the electrons in a crystal, there are several important differences.  The spatial order does not result from interactions between atoms but from an external potential created by light~\cite{CCTAdvAtomPhys}.  The distance of two atoms are quite large large (of wave length), however their interaction may be tuned by Feshbach resonance~\cite{CourteilleFeshbachResonance}~\cite{FeshbachInouye}.  The optical lattice is thus very flexible by tuning parameters of lasers and atom-light coupling.  The developments of these techniques enlarges the accessible range with cold atoms, enabling people to study strongly correlated system of complex quantum liquid~\cite{BlochDalibardRevModPhys.80.885}.   For review of  details about trapping atoms and forming optical lattice, the reader may resort to Ref.~\citenum{Grimm200095}.  For recent review concerning the developments and application of optical lattice in the field of quantum gases one may resort to Ref.~\citenum{LewensteinRev} and ~\citenum{BlochDalibardRevModPhys.80.885}.

\subsection{Optical Flux Lattice}
In analog to the crystal electron moving under external electromagnetic field, we would like to apply a ``magnetic field'' to our atoms in optical lattice.  As we have demonstrated in the previous sections, various schemes were proposed to generate an artificial gauge field in cold atom system.  However a typical scale of the flux density is of order $n_{\phi}\sim1/(L\lambda)$, where $L$ is a macroscopic length much larger than $\lambda$.  In order to reach the strongly correlated regime, one requires a filling factor $\nu\sim1$, i.e., $L$ should be as large as $\lambda$.  It is Cooper who proposed a possible scheme to generate such a strong ``magnetic field'' in optical lattice, which is referred as Optical Flux Lattice (OFL)~\cite{CooperOFL}.

In section~\ref{sec:TLA}, we have shown that the Hamiltonian of a two level atom moving in a light field is of form
\begin{equation}\label{eq:HamTLAtom}
  H= \frac{\mathbf{P}^2}{2M} \hat{\mathbb{I}} +U,
\end{equation}
where $U$ is a $2\times2$ Hermitian matrix dictating the atom-light coupling.  It can be written in general as
\begin{equation}\label{eq:HamALcoupling}
  U=V\mathbf{M}\cdot\hat{\bm{\sigma}}=
  V\begin{pmatrix}  M_z &  M_x-iM_y\\ M_x+iM_y & -M_z \end{pmatrix}.
\end{equation}
Following the same procedure of projecting to one subspace of eigenstate $|\chi\rangle$, we obtain the effective Hamiltonian of adiabatic orbital motion with an induced vector and scalar potential.  The vector potential is given by
\begin{equation}\label{eq:HamALcoupling}
  \mathbf{A}=i\hbar\langle\chi|\nabla|\chi\rangle,
\end{equation}
and the density of magnetic flux quanta is
\begin{equation}\label{eq:DoFluxQuant}
  n_{\phi}=\frac{\nabla \times\mathbf{A}}{\Phi_0}=\frac{i}{2\pi}\nabla\times\langle\chi|\nabla|\chi\rangle,
\end{equation}
where flux quanta $\Phi_0=h$ \footnote{Here we have taken $q^*=1$. The definition of the flux quantum differs from that defined in superconductor where $q^*=2q$.  Factor $2$ comes from the formation of cooper pair.}.

Now let us deviate to consider some geometrical and topological argument. In quantum Hall effect, we applied a magnetic field perpendicular to the system in the x-y plane.  This magnetic field is described by a vector potential, which explicitly breaks the translational symmetry (either with symmetrical gauge or Landau gauge).  Whereas with this magnetic field, electrons forms Landau levels that are topologically non-trivial.  However, Haldane~\cite{HaldaneModel} demonstrated that the topologically non-trivial band can be achieved without Landau levels preserving the translational symmetry.  The non-triviality is indicated by Chern number defined as integral of Berry curvature over the First Brillouin Zone (FBZ). The parameter space of the Haldane model is the quasimomentum vectors in FBZ with a periodic boundary, i.e. all the momentum vectors within FBZ forms torus topologically.  Mathematically, it means that integral of a closed form over a compact manifold without boundary, or i.e. closed form might not be exact, which reflects the non-trivial topology of the base manifold through de Rham cohomology.

With same reasoning, the vector gauge potential defined above is a connection of the frame bundle with basis $<|\chi\rangle>$, and the magnetic field is the curvature 2-form induced from $dA$ that is closed (only valid for Abelian case).  Both of them are defined periodically over the optical lattice in x-y plane, or equivalently over a base manifold of torus by periodical identification.  Different from the case by Haldane, the base manifold now is unit cell rather than torus.  The integral of the curvature form may give Chern number other than $0$ as long as we have non-trivial frame bundle $|\chi(\mathbf{R})\rangle$ (the base manifold of torus is topologically non-trivial).

More specifically, we, following Cooper's idea~\cite{CooperOFL}, define the local Bloch vector
\begin{equation}\label{eq:Blochvector}
  \mathbf{n}(\mathbf{r})=\langle \chi|\hat{\bm{\sigma}}|\chi\rangle,
\end{equation}
for which $\mathbf{n}\cdot\mathbf{n}=1$.
The flux density follows as
\begin{equation}\label{eq:FluxDensity}
  n_{\phi}=-\frac{1}{8\pi}\epsilon_{ijk}\epsilon_{\mu\nu}{n}_i(\partial_\mu{n}_j)(\partial_\nu{n}_k),
\end{equation}
where $n_i$ represents the components of Bloch vector $\mathbf{n}$, and $\partial_\mu$ is the partial derivative with respect to the local coordinates.  The antisymmetric tensor $\epsilon_{\mu\nu}$ arises from exterior differential operators, and $\epsilon_{ijk}$ is induced from volume form in three dimensional space (the spin operator $\hat{\bm{\sigma}}$ has three components).  The expression of the flux density has a natural geometric meaning in analog to the Gauss-Weingarten map.  As it is pointed out by Cooper~\cite{CooperOFL}, the flux through an area $A$ is given by $\int_A n_{\phi} (d^2\mathbf{r})=\Omega/4\pi$ where $\Omega$is the total solid angle that region $A$ maps to on the Bloch sphere. Therefore total flux over a unit cell counts the number that Bloch sphere were wrapped and gives an integer number $N_\phi$.

In Ref.~\citenum{CooperOFL}, Cooper projected $\mathbf{n}$  to the x-y plane and obtained a two dimensional vector field over the unit cell with periodic boundary condition.  He also noticed that the net flux has close relation to the singular points of the vector field (where the vector field vanishes)~\cite{CooperOFL}.  Here we would like to interpret these results in a more geometrical way, which may help us calculate the Chern number without heavy numerical calculus.

It is plausible to assume that the singularity of the projected vector field is isolated.  Thus we may consider a closed disc $D$ containing the singularity $p$ as an inner point, and define the index of $p$ as the degree of the map $f:\partial D\rightarrow S^1$.  $f$ maps the vector $\mathbf{v}=\Pi_\perp\mathbf{n}$ at a point of $\partial D$ to the unit sphere by $\mathbf{v}/|\mathbf{v}|$.  The degree of $f$ is simply winding number over $S^1$ as one turns  once around $D$.  Since $\mathbf{n}$ is a unit vector, at singularity $p$, $\mathbf{v}=\mathbf{0}$ means  $\mathbf{n}$ points either along positive z-direction or negative. Physically, This means that the off diagonal entries of the atom-light coupling vanish.  The examples given in Ref.~\cite{CooperOFL} all have singularities with $\Deg f=\pm1$.   All the interpretations following is easy to be generalized to singularities of higher indices which just means wrapping the sphere more than once.  So we shall focus ourself only on the simplest cases of  $\Ind (p)=\pm1$ without loss of generality.

Consider a singular point $p$ where $\mathbf{n}(p)=(0,0,1)$.
\begin{figure}
  \centering
    \subfigure[a][Bloch vector]{\includegraphics[width=0.35\textwidth]{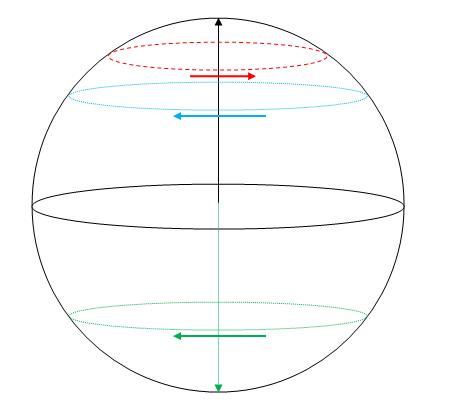}\label{fig:Bloch}}
    \subfigure[b][Projected vector field around a singularity]{\includegraphics[width=0.3\textwidth]{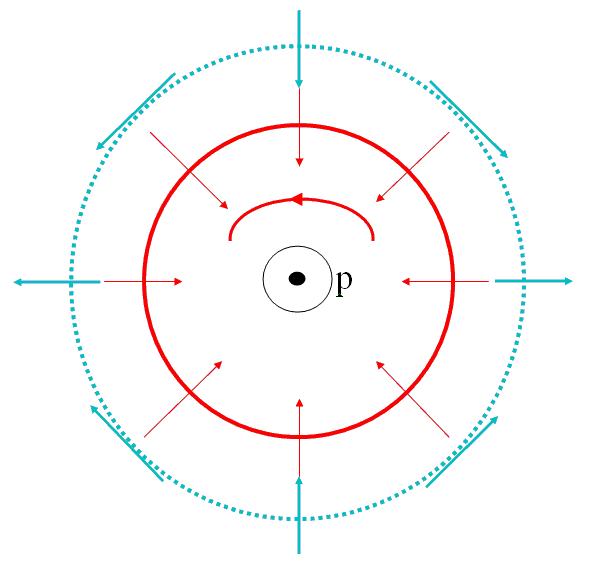}\label{fig:Hopf}}\\
  \caption{Index of Singularities}\label{fig:BlochHopf}
\end{figure}
The project vector field around $p$ can be either red case or the blue case shown in Fig.~\ref{fig:Hopf}.  If we turn around counter clockwisely (in the direction of red arrow) $p$ along the red contour, the field vector mapped to the Bloch sphere turns around the North pole along the red dash-line counter clockwisely (also in the direction of red arrow) shown in Fig.~\ref{fig:Bloch}.  So the singularity of this type (red case) is said to have an index $\Ind(p)=1$.  Similarly, in the case of blue, the field vector turns clockwisely on the Bloch sphere as one turn around singularity counter clockwisely, where the singularity is said to have an index $\Ind(p)=-1$.  (For the singularity with $\mathbf{n}$ pointing to South pole as shown by green arrow, the index can be defined in a similar way. To avoid ambiguity, we shall always turn around the singularity counter clockwisely from top view. If the vector under the Bloch map also truns counter clockwisely from the view of North pole, the singularity has index $1$; otherwise it has index$-1$.)  If we choose a disc $D$ small enough (it is always possible as singularity $p$ is isolated), the vectors will be mapped to the corresponding cap on the Bloch sphere, and the flux through that region is proportional to the solid angle of the cap.  If we choose a larger disc, the cap enclosed by the contour on the Bloch sphere may expand, shrink, tilt or even deformed.  However as long as the boundary of our domain in the x-y plane does not traverse any other singularity, everything will be well defined. One must be careful to the orientation of the integral here.  Since $\epsilon_{ijk}$ appears in Eq.~\ref{eq:FluxDensity}, the cap enclosing North pole with a counter clockwise boundary has a positive orientation, whereas the one enclosing South pole with a clockwise boundary (from the view of North pole) has positive orientation.

Now we enlarge the domains, each containing one the singularity, to cover the unit cell, but keep the boundaries of the domains not traversing other singularities nor crossing each other.  Then the integral over the unit cell is exactly the sum of the integral over every domain. It is worthy to note that all the boundaries cancel each other exactly in a sense that the  orientations should be taken into account. (The boundary of the unit cell was identified under periodic boundary condition.  It has a simple topological root  that torus has no boundary.)  If we consider the corresponding counter on the Bloch sphere, they should coincide with each other but with opposite orientation, which means the caps enclosed by them either have wrapped the Bloch sphere completely  for finite times or they cancel each other exactly. (The integral over same cap with opposite boundary orientation differs by a minus sign.)  In the former case, the solid angle wrapped by the caps is an integer multiples of $4\pi$ that means a net quantized flux (integer multiples of $\Phi_0$) through unit cell.

Another remarkable feature deserve to be emphasised here is that the singularities (with index $\pm1$) of the vector field over the  torus must appear pairwisely.  Due to Poincar\'e-Hopf index theorem
  \begin{equation}\label{eq:PoincareHopfthm}
    \sum_{All\  singularities\  p} \Ind(p)=\chi=2-2g,
  \end{equation}
where $\chi$ is Euler characteristic and $g$ is the genus of the manifold. In our case, torus has $g=1$, so $\chi=0$.  Thus the pair wise appearance of the singularities is the direct corollary from this theorem.  If there are only two singularities in the unit cell, and both of them has the vector $\mathbf{n}$ pointing to North pole, they must cancel themselves and no net flux penetrates the unit cell.  If they point in opposite direction, they should wrapping the sphere one time and do contribute to the net flux cooperatively.  So, with this feature, as long as the assumptions of isolated singularities is valid, we can read off Chern number directly from the projected vector field instead of tedious numerical computations.

In Ref.~\citenum{CooperOFL} pointed out that the singularities depends on the gauge chosen.  Mathematically, this corresponds that the choice of connection is not unique.  However, all the properties we concerned arise from the non-triviality of the frame bundle, which has a strong topological root that is independent of the specific choice of the connection.  Back to physics, this means that all the physical properties are gauge independent, which is well-known to the physicists.

Back from the long deviation, with the scheme of OFL, we are able to generate a much stronger artificial magnetic field, which enable study the more strongly correlated regime previously unattainable.  A scheme of OFL for two photon dressed states was proposed by Cooper and Dalibard~\cite{CooperDalibardOFL}. Possibility of reaching Fractional Quantum Hall (FQH) states with QFL was also proposed~\cite{FQHwithOFL}.

\section{Conclusion and Outlook}
In this article we reviewed the historical developments in artificial gauge fields and spin-orbit couplings in cold atom systems. we worked out examples carefully and connected physical and mathematical formalisms of same objects in different ways.  We gave intuitive and accessible physical and mathematical interpretations to optical flux lattice which reveals its deep connections between physics and topology.

The cold atom techniques have developed over past decays.  Artificial gauge field in cold atom systems is a new laboratory tool for us to study novel quantum phenomena not only in cold atoms, but also condensed matter physics, quantum information and even particle physics and cosmology.  New ideas and technologies are emerging which enable us to reach much farther and understand better of the complex quantum systems.

\section*{Acknowledgements}
The author would express sincere gratitude to J.~Dalibard, and N.~Goldman for their helpful discussions.  The author would thank J.-L.~Li and G.-C.~Li for their help for mathematics.  The author also thank S.~Nascimb\`ene for the details concerning the experiments.

\section{Appendix}
\subsection{Mathematical Formulae}
\begin{thm}\label{thm:sindelta}
  The limit of the measure $(\sin \alpha t)/\alpha$ satisfies
  \[
    \lim_{t\rightarrow+\infty}\frac{\sin \alpha t}{\alpha}=\pi\delta(\alpha).
  \]
\end{thm}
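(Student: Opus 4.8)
The identity must be read in the weak (distributional) sense: for a test function $f$ that is $C^{1}$ with sufficient decay (a Schwartz function, say), the content of the theorem is that
\[
  \lim_{t\to+\infty}\int_{-\infty}^{\infty}\frac{\sin\alpha t}{\alpha}\,f(\alpha)\,d\alpha=\pi f(0).
\]
The plan is to add and subtract the value at the origin, writing $f(\alpha)=f(0)+\bigl(f(\alpha)-f(0)\bigr)$, so that the integral splits into a \emph{singular} piece that carries all of the delta behaviour and a \emph{regular} remainder that I expect to vanish in the limit.

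For the singular piece I would pull the constant $f(0)$ out and change variables by $u=\alpha t$ (taking $t>0$), which turns $\int_{-\infty}^{\infty}(\sin\alpha t)/\alpha\,d\alpha$ into the Dirichlet integral $\int_{-\infty}^{\infty}(\sin u)/u\,du$, a number independent of $t$. This contributes exactly $\pi f(0)$ once the value $\pi$ of the Dirichlet integral is in hand. For the remainder I set $g(\alpha)=\bigl(f(\alpha)-f(0)\bigr)/\alpha$; because $f$ is $C^{1}$ this has a removable singularity at the origin, with $g(0)=f'(0)$, so $g$ is continuous and integrable. The remainder integral then reads $\int g(\alpha)\sin(\alpha t)\,d\alpha$, which tends to $0$ as $t\to+\infty$ by the Riemann--Lebesgue lemma. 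Adding the two contributions yields the claimed limit $\pi f(0)$, i.e. weak convergence to $\pi\delta(\alpha)$.

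The main obstacle is the evaluation $\int_{-\infty}^{\infty}(\sin u)/u\,du=\pi$ itself, which is delicate because the integrand is only conditionally (not absolutely) integrable, so the ``integral'' must be understood as an improper Riemann integral rather than a Lebesgue one. I would establish this value by one of the standard routes: introduce a convergence factor and compute $\int_{0}^{\infty}e^{-su}(\sin u)/u\,du=\arctan(1/s)$ by differentiating under the integral sign with respect to the parameter $s$, then let $s\to0^{+}$ to get $\pi/2$ on the half-line and double by parity; alternatively a contour integration for $\int e^{iu}/u\,du$ with a small indentation at the origin gives the same answer. The same conditional convergence calls for mild care in justifying the change of variables and in applying Riemann--Lebesgue, but for Schwartz test functions all the integrability and dominated-convergence hypotheses are satisfied, so those points are routine once the Dirichlet value is secured.
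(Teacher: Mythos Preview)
Your argument is correct and uses the same two ingredients as the paper's proof: the Riemann--Lebesgue lemma to kill the oscillatory remainder, and the Dirichlet integral $\int_{-\infty}^{\infty}(\sin u)/u\,du=\pi$ to fix the normalisation.

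The organisation differs slightly. The paper decomposes the \emph{domain}: on any interval $[\alpha-\epsilon,\alpha+\epsilon]$ bounded away from the origin it applies Riemann--Lebesgue directly to $f(x)/x$, while on a punctured neighbourhood $[-\epsilon,-\delta)\cup(\delta,\epsilon]$ of $0$ it invokes the mean value theorem to extract $f(0)$ and then rescales to the Dirichlet integral. You instead decompose the \emph{function}, writing $f(\alpha)=f(0)+(f(\alpha)-f(0))$ and handling the two pieces globally. Your route is a bit cleaner---it avoids the mean value step and the $\delta\to0$ limit, and the $C^{1}$ hypothesis makes $g(\alpha)=(f(\alpha)-f(0))/\alpha$ manifestly integrable so that Riemann--Lebesgue applies in one shot. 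The paper's route, by contrast, works already for $f$ merely continuous with compact support, since the mean value theorem replaces your Taylor-type regularity argument. Either way the substance is the same.
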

\begin{proof}
  $\forall \alpha \neq 0, \exists \epsilon>0$, such that $0\notin [\alpha-\epsilon,\alpha+\epsilon]$.

  By Riemann-Lebesgue lemma, we have $\forall f(x) \in C[a,b]$
  \[
   \lim_{t\rightarrow+\infty} \int_{\alpha-\epsilon}^{\alpha+\epsilon} {dx f(x) \frac{\sin x t}{x}}=0.
  \]

  For $\alpha=0$, Riemann integral over $[-\epsilon,+\epsilon]$, is not well-defined; instead we calculate the integral over $I=[-\epsilon,-\delta)\cup(+\delta,+\epsilon]$ and take the limit of $\delta \rightarrow 0$. Since f is continuous, by mean-value theorem, we have
  \[
  \begin{split}
    \lim_{t\rightarrow+\infty} \int_{I} {dx f(x) \frac{\sin x t}{x}}
    =&\lim_{t\rightarrow+\infty} f(\theta\delta)\int_{I} {dx \frac{\sin x t}{x}}\\
    \rightarrow&\lim_{t\rightarrow+\infty} f(0)\int_{\tilde{I}} {dx \frac{\sin x t}{x}}\\
    =& f(0)\int_{-\infty}^{+\infty} {dy \frac{\sin y}{y}}=\pi f(0)
  \end{split}
  \]
\end{proof}

\begin{thm}\label{thm:sinsqdelta}
  The limit of the measure $(\sin^2 \alpha t)/\alpha t$ satisfies
  \[
    \lim_{t\rightarrow+\infty}\frac{\sin^2 \alpha t}{\alpha^2 t}=\pi\delta(\alpha).
  \]
\end{thm}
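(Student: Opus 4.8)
The statement is to be read distributionally, and I would begin by making that explicit: the claim asserts that against any bounded continuous test function $f$ (say Schwartz, or $f\in C_0$) one has $\lim_{t\to+\infty}\int_{-\infty}^{+\infty} f(\alpha)\frac{\sin^2\alpha t}{\alpha^2 t}\,d\alpha=\pi f(0)$. As a pointwise limit the identity is false (for each fixed $\alpha\neq0$ the left side tends to $0$), so the entire content lives in this weak sense, exactly as in Theorem~\ref{thm:sindelta}.

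The plan is to rescale. First I would substitute $y=\alpha t$, which turns the integral into
\begin{equation*}
\int_{-\infty}^{+\infty} f(\alpha)\frac{\sin^2\alpha t}{\alpha^2 t}\,d\alpha=\int_{-\infty}^{+\infty} f\!\left(\frac{y}{t}\right)\frac{\sin^2 y}{y^2}\,dy,
\end{equation*}
where the $t$ in the denominator is precisely what is consumed by the Jacobian $d\alpha=dy/t$ together with $\alpha^{-2}=t^2/y^2$; this explains why the theorem is normalized with $\alpha^2 t$ rather than $\alpha^2$. The decisive structural remark is that, unlike in Theorem~\ref{thm:sindelta}, the kernel $\sin^2 y/y^2$ is \emph{absolutely} integrable: it is bounded near the origin, where $\sin^2 y/y^2\to1$, and decays like $y^{-2}$ at infinity. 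Hence no conditional-convergence subtleties arise and I can dispense with the Riemann--Lebesgue/mean-value argument used earlier.

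Next I would pass to the limit under the integral sign. Since $f$ is bounded, the integrand is dominated by $\|f\|_\infty\,\sin^2 y/y^2\in L^1$, and for each fixed $y$ continuity of $f$ gives $f(y/t)\to f(0)$; dominated convergence then yields
\begin{equation*}
\lim_{t\to+\infty}\int_{-\infty}^{+\infty} f\!\left(\frac{y}{t}\right)\frac{\sin^2 y}{y^2}\,dy=f(0)\int_{-\infty}^{+\infty}\frac{\sin^2 y}{y^2}\,dy.
\end{equation*}
It then remains only to pin down the normalization $\int_{-\infty}^{+\infty}\sin^2 y/y^2\,dy=\pi$. I would obtain this by integrating by parts with $u=\sin^2 y$ and $dv=y^{-2}\,dy$, so that the boundary term vanishes and the integral reduces to $\int_{-\infty}^{+\infty}\frac{2\sin y\cos y}{y}\,dy=\int_{-\infty}^{+\infty}\frac{\sin 2y}{y}\,dy$; the substitution $z=2y$ identifies this with the Dirichlet integral $\int_{-\infty}^{+\infty}\frac{\sin z}{z}\,dz=\pi$, which is exactly the normalization already behind Theorem~\ref{thm:sindelta}. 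Combining the three displays gives $\pi f(0)$, i.e. the asserted identity.

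The only genuine obstacle is the bookkeeping that makes the weak limit legitimate: choosing the right class of test functions and justifying the interchange of limit and integral. Here the absolute integrability of the Fej\'er-type kernel makes dominated convergence immediate, so this step is in fact routine, and the proof is markedly cleaner than that of Theorem~\ref{thm:sindelta}. The whole difficulty is therefore conceptual, namely recognizing the distributional reading and the correct rescaling, rather than analytic.
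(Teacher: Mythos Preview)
Your argument is correct and, in fact, supplies a complete proof where the paper only gives a sketch. The paper's ``proof'' merely observes the pointwise behaviour (vanishing for $\alpha\neq0$, divergence like $t$ at $\alpha=0$) and then proposes to fix the normalization constant by writing $\sin^2 x=(1-\cos 2x)/2$ and evaluating a Dirichlet-type integral by residues; it does not actually establish the weak convergence.

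Your route is genuinely different: you recognize $\sin^2(\alpha t)/(\alpha^2 t)$ as a rescaled Fej\'er-type kernel, change variables to $y=\alpha t$, and then invoke dominated convergence, which is available precisely because $\sin^2 y/y^2\in L^1(\mathbb{R})$. This yields the distributional limit directly and rigorously, bypassing the split into ``$\alpha\neq0$'' and ``$\alpha=0$'' cases that the paper inherits from its proof of Theorem~\ref{thm:sindelta}. For the normalization you reduce $\int\sin^2 y/y^2\,dy$ to the Dirichlet integral by integration by parts, whereas the paper hints at a contour/residue computation; both lead to $\pi$, but yours ties the result more transparently to the constant already used in Theorem~\ref{thm:sindelta}. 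One small point worth making explicit in your write-up: the integration by parts should be performed on $(\epsilon,\infty)$ and $(-\infty,-\epsilon)$ and then $\epsilon\to0$ taken, since $v=-1/y$ is singular at the origin; the boundary contribution $\sin^2\epsilon/\epsilon$ vanishes in that limit, so the conclusion is unaffected.
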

\begin{proof}
  Here we shall not give a detail proof.  But one may easily observe that for $\alpha\neq0$, the measure tends to $0$. Around $\alpha=0$, the measure diverges as $t$.  The constant can be obtained by residue formula if one write $\sin^2 x=(1-\cos 2x)/2$ as $(1-e^{2ix})/2$ and calculate a similar integral to the one in Theorem~\ref{thm:sindelta}.
\end{proof}


\begin{prop}\label{thm:diffs}
  $\langle d f|\wedge |d f \rangle =d \langle  f|d f \rangle$
\end{prop}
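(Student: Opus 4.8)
The plan is to expand both sides in local coordinates using the Einstein summation convention already adopted in the paper, and then exploit the antisymmetry of the wedge product against the symmetry of mixed second partial derivatives. First I would fix local coordinates $x^a$ on the base manifold and write the Hilbert-space-valued $1$-forms explicitly as $|df\rangle = (\partial_a |f\rangle)\, dx^a$ and $\langle df| = (\partial_a \langle f|)\, dx^a$, keeping in mind that the symbol $\langle \cdot | \cdot \rangle$ contracts only the internal (spin) indices while $\wedge$ acts only on the differential-form part, and that these two operations commute because the internal product is a fixed finite sum over components whereas $d$ differentiates the smooth coefficient functions over the base.

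With this notation the left-hand side becomes $\langle df|\wedge|df\rangle = \langle \partial_a f | \partial_b f\rangle\, dx^a\wedge dx^b$. For the right-hand side I would first observe that $\langle f|df\rangle = \langle f|\partial_b f\rangle\, dx^b$ is an ordinary $1$-form, whose exterior derivative is $d\langle f|df\rangle = \partial_a\langle f|\partial_b f\rangle\, dx^a\wedge dx^b$. Applying the Leibniz rule to the internal inner product then gives $\partial_a\langle f|\partial_b f\rangle = \langle \partial_a f|\partial_b f\rangle + \langle f|\partial_a\partial_b f\rangle$, so that two terms appear under the wedge.

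The crux of the argument — and really the only step that does any work — is that the term $\langle f|\partial_a\partial_b f\rangle$ is symmetric under $a\leftrightarrow b$ since mixed partials commute for the smooth $|f\rangle$, i.e. $\partial_a\partial_b f = \partial_b\partial_a f$, and is therefore annihilated when contracted against the antisymmetric $dx^a\wedge dx^b$. Hence only the first term survives, leaving $d\langle f|df\rangle = \langle \partial_a f|\partial_b f\rangle\, dx^a\wedge dx^b$, which is precisely the left-hand side computed above. I do not anticipate any genuine obstacle; the single point demanding care is purely notational, namely keeping the internal Hermitian contraction and the exterior wedge cleanly separated so that the Leibniz rule and the symmetric/antisymmetric cancellation can be invoked without ambiguity. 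A coordinate-free phrasing is also available — differentiating $\langle f|df\rangle$ and using $d^2=0$ — but the index computation is the most transparent and matches the style of the surrounding calculations such as Eq.~\ref{eq:dddd}.
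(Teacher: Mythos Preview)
Your argument is correct and is essentially the paper's proof in coordinate form: the paper instead expands $|f\rangle$ componentwise in the internal space as $(f_1,\dots,f_N)^T$, writes $\langle f|df\rangle=\sum_i f_i^{*}\,df_i$, and applies $d$ using linearity and $d^2=0$ to get $\sum_i df_i^{*}\wedge df_i=\langle df|\wedge|df\rangle$. Your cancellation of $\langle f|\partial_a\partial_b f\rangle\,dx^a\wedge dx^b$ via the symmetry of mixed partials is precisely the coordinate manifestation of $d^2=0$, and indeed you already note this coordinate-free alternative at the end.
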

\begin{proof}
  Assume that internal degree of freedom is finite, the $| f \rangle$ can be written componentwisely in a specific basis $f={(f_1,f_2,...,f_N)}^T$.  And the inner product is
  \begin{equation*}
    \langle f|g\rangle = \sum_{i=1}^{N}f_i^*g_i.
  \end{equation*}

  Since $df={(df_1,df_2,...,df_N)}^T$,
  \begin{equation*}
     \begin{split}
       d \langle  f|d f \rangle
       =& d \left( \sum_{i=1}^{N}f_i^*df_i \right)  \\
       =&  \left( \sum_{i=1}^{N} d f_i^*\wedge df_i \right)
       = \langle d f|\wedge |d f \rangle,
     \end{split}
  \end{equation*}
  where linearity of the inner product was used in the first step (thus $d $ commutes with $\sum$), and $d^2=0$ was used in the second step.

\end{proof}

\begin{prop}\label{thm:nonabtrans}
One kind of intuitive choice of eigenvectors of ${\bm{\sigma}}\cdot{\bf n}$, where $\mathbf{n}=(\sin \theta \cos \phi,\sin \theta \sin \phi,\cos \theta)$ is by first choosing the eigenvectors of $\sigma_z$ as $|+\rangle={(1,0)}^T$ and $|-\rangle={(0,1)}^T$, and then rotating in y- and z- direction consecutively (z after y).  Therefore one obtain the eigenvectors in ${\bf n}$- direction is
\begin{equation}\label{eq:eigvecn}
  \begin{split}
  |+(\theta,\phi)\rangle =&
    \begin{pmatrix}
      e^{-i\phi/2} \cos (\theta/2)\\
      e^{i\phi/2}  \sin (\theta/2)
    \end{pmatrix}, \\
  |-(\theta,\phi)\rangle =&
    \begin{pmatrix}
      -e^{-i\phi/2} \sin (\theta/2)\\
      e^{i\phi/2}   \cos (\theta/2)
    \end{pmatrix},
  \end{split}
\end{equation}
which is different from the choice in Eq.~\ref{eq:inteignvect} up to a phase factor.  Thus the gauge potential induced by them is connected by the formula $\mathbf{A}'=U^\dagger\mathbf{A}U+i\hbar U^\dagger dU$, where $U$ transform one basis to another.\\
\end{prop}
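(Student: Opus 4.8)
The plan is to verify the proposition in three stages: first that the spinors in Eq.~\ref{eq:eigvecn} really are produced by the stated rotation construction (and so are eigenvectors of $\bm{\sigma}\cdot\mathbf{n}$); second that they differ from the $|\chi_{1,2}\rangle$ of Eq.~\ref{eq:inteignvect} by a diagonal phase, which identifies the relevant $U$; and third that this $U$ intertwines the two connections through $\mathbf{A}'=U^\dagger\mathbf{A}U+i\hbar U^\dagger dU$. For the first stage I would write the spinor rotations explicitly, $e^{-i\theta\sigma_y/2}$ and $e^{-i\phi\sigma_z/2}$, expand each as a $2\times2$ matrix, compose them as $e^{-i\phi\sigma_z/2}e^{-i\theta\sigma_y/2}$ (z after y), and apply the product to $(1,0)^T$ and $(0,1)^T$; this reproduces the two columns of Eq.~\ref{eq:eigvecn} directly. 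Because $\bm{\sigma}\cdot\mathbf{n}=R\,\sigma_z\,R^\dagger$ with $R$ this composite rotation, the images of the $\sigma_z$-eigenvectors are automatically eigenvectors of $\bm{\sigma}\cdot\mathbf{n}$ with unchanged eigenvalues, so no separate eigenvalue check is needed.

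For the second stage, comparing Eq.~\ref{eq:eigvecn} with Eq.~\ref{eq:inteignvect} entry by entry I would read off $|+(\theta,\phi)\rangle=e^{-i\phi/2}|\chi_1\rangle$ and $|-(\theta,\phi)\rangle=e^{i\phi/2}|\chi_2\rangle$. Thus the two orthonormal frames are related by the diagonal unitary $U=\mathrm{diag}(e^{-i\phi/2},e^{i\phi/2})$, which is the position-dependent transformation entering the gauge formula.

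For the third stage I would substitute $|\chi_j'\rangle=\sum_k U_{kj}|\chi_k\rangle$ into the definition $\mathbf{A}'_{lj}=i\hbar\langle\chi_l'|\nabla|\chi_j'\rangle$. The gradient acts by the product rule on both the position-dependent entries $U_{kj}$ and on the kets $|\chi_k\rangle$; invoking orthonormality $\langle\chi_m|\chi_k\rangle=\delta_{mk}$ collapses one sum and separates the result into the homogeneous part $(U^\dagger\mathbf{A}U)_{lj}$ and the inhomogeneous part $i\hbar(U^\dagger\nabla U)_{lj}$, i.e. $i\hbar U^\dagger dU$ in form language, which is exactly the asserted transformation. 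As a concrete check I would verify the $(1,1)$ entry against Eq.~\ref{eq:matA}: the diagonal conjugation leaves $(U^\dagger\mathbf{A}U)_{11}=-\hbar\sin^2(\theta/2)\,d\phi$ unchanged, the derivative term gives $i\hbar(U^\dagger dU)_{11}=\tfrac{\hbar}{2}\,d\phi$, and their sum $\tfrac{\hbar}{2}\cos\theta\,d\phi$ agrees with $i\hbar\langle+|d|+\rangle$ computed directly from Eq.~\ref{eq:eigvecn}.

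The only real obstacle is bookkeeping rather than analysis: the inhomogeneous term $i\hbar U^\dagger dU$ appears precisely because $\nabla$ differentiates the position dependence carried by $U$ (here through $\phi$), and one must keep the adjoint and index conventions consistent so that this term is neither dropped nor conjugated the wrong way. The physical content is simply that the connection transforms inhomogeneously, confirming the non-Abelian gauge structure already present in Eq.~\ref{eq:result1}.
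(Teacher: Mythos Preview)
Your proposal is correct and covers the content of the proposition, but its emphasis differs from the paper's own argument. The paper takes the transformation law $\mathbf{A}'=U^\dagger\mathbf{A}U+i\hbar U^\dagger dU$ as given (it is stated earlier in the text), identifies the same diagonal $U=\mathrm{diag}(e^{-i\phi/2},e^{i\phi/2})$ you found, computes $i\hbar U^\dagger dU=\mathrm{diag}(\tfrac{\hbar}{2}d\phi,-\tfrac{\hbar}{2}d\phi)$, and then writes out the full $2\times 2$ matrix $\mathbf{A}'$ explicitly, including the off-diagonal entries. It does not rederive the rotation construction of Eq.~\ref{eq:eigvecn}, nor does it prove the transformation law from the definition of $\mathbf{A}'_{lj}$; it is purely a concrete verification for this particular $U$. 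Your route is more conceptual: you establish the rotation origin of the new frame, derive the inhomogeneous transformation law abstractly via the product rule, and then spot-check only the $(1,1)$ entry. What you gain is a self-contained derivation of the gauge-transformation formula that applies to any frame change; what the paper's brute-force computation buys is the full matrix $\mathbf{A}'$ in this gauge, so one can see at a glance that the off-diagonal entries lose their $e^{\pm i\phi}$ factors. If you want to match the paper exactly, simply carry your computation through the remaining three entries of $U^\dagger\mathbf{A}U+i\hbar U^\dagger dU$ using Eq.~\ref{eq:matA}.
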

\begin{proof}
In Eq.~\ref{eq:inteignvect} we have used base vectors
\begin{equation*}
  \begin{split}
  |\chi_1\rangle =&
    \begin{pmatrix}
      \cos (\theta/2)\\
      e^{i\phi} \sin (\theta/2)
    \end{pmatrix}, \\
  |\chi_2\rangle =&
    \begin{pmatrix}
      -e^{-i\phi} \sin (\theta/2)\\
      \cos (\theta/2)
    \end{pmatrix}.
  \end{split}
\end{equation*}
They are transformed to our new basis $|\pm\rangle$ by a unitary transformation
\begin{equation*}
  \begin{pmatrix}|+\rangle & |-\rangle \end{pmatrix}=
  \begin{pmatrix}|\chi_1\rangle & |\chi_2 \rangle \end{pmatrix}
  \begin{pmatrix}
      e^{-i\phi/2} &   0\\
      0            &  e^{i\phi/2}
    \end{pmatrix}
    =\begin{pmatrix}|\chi_1\rangle & |\chi_2 \rangle \end{pmatrix} U.
\end{equation*}
\begin{equation*}
   \begin{split}
     i\hbar U^\dagger dU=& i\hbar
  \begin{pmatrix}
      e^{i\phi/2} &   0\\
      0            &  e^{-i\phi/2}
    \end{pmatrix}
    \begin{pmatrix}
      \frac{-i}{2}d\phi e^{-i\phi/2} &   0\\
      0            &  \frac{i}{2}d\phi e^{i\phi/2}
    \end{pmatrix}\\
    =&\begin{pmatrix}
      \frac{\hbar }{2}d\phi &   0\\
      0            &  -\frac{\hbar}{2}d\phi
    \end{pmatrix}
   \end{split}
\end{equation*}
\begin{equation*}
\begin{split}
  {\mathbf{A}}'=& U^\dagger {\mathbf{A}} U + i\hbar U^\dagger dU\\
  =&
\begin{pmatrix}
  \hbar [\frac{1}{2}-\sin^2 \left(\frac{\theta}{2}\right)] d \phi &
  -\frac{\hbar}{2}(id \theta +(\sin\theta )d \phi)\\
  \frac{\hbar}{2} (id \theta -(\sin\theta )d \phi)
  & \hbar [\sin^2 \left(\frac{\theta}{2}\right)-\frac{1}{2}] d \phi
\end{pmatrix}.
\end{split}
\end{equation*}
\end{proof}

\bibliographystyle{ieeetr}
\bibliography{artificialgaugeandsoc}

\end{document}